\documentclass[10pt,journal,english,twocolumn]{IEEEtran}
\usepackage{blindtext}

\usepackage{babel}
\usepackage[babel]{microtype}
\usepackage[T1]{fontenc}

\usepackage{graphicx}
\usepackage{xcolor}
\usepackage{tikz}
\usepackage{pgfplots}

\usepackage{tabularx}
\usepackage{booktabs}
\usepackage{stfloats}

\usepackage{amsmath}
\usepackage{amsfonts}
\usepackage{amssymb}
\usepackage{amsthm}
\usepackage{bm}
\usepackage{siunitx}

\usepackage{csquotes}
\usepackage[backend=biber,url=false,style=ieee,isbn=false]{biblatex}
\bibliography{literature.bib}

\usepackage{hyperref}
\hypersetup{
	pdfauthor={Karl-Ludwig Besser},
	pdftitle={Reliability Bounds for Dependent Fading Wireless Channels},
	colorlinks=true,
	allcolors=.,
	urlcolor=blue,
}

\usepackage[acronyms,nomain,xindy]{glossaries}
\makeglossaries
\glsenableentrycount
\loadglsentries{acronyms.tex}
\setacronymstyle{long-short}

\pgfplotsset{compat=newest}
\pgfplotsset{plot coordinates/math parser=false}

\usetikzlibrary{shapes, arrows, circuits, trees, positioning, fit, external}

\newcommand{\diff}{\ensuremath{\mathrm{d}}}
\newcommand{\expect}[2][]{\ensuremath{\mathbb{E}_{#1}\left[#2\right]}}
\newcommand{\inv}[1]{\ensuremath{#1^{-1}}}
\newcommand{\abs}[1]{\ensuremath{\left|#1\right|}}

\newcommand{\worstout}{\ensuremath{\underline{R}^{\varepsilon}}}
\newcommand{\bestout}{\ensuremath{\overline{R^{\varepsilon}}}}

\renewcommand{\Pr}{\ensuremath{\mathbb{P}}}
\theoremstyle{plain}%
\newtheorem{thm}{Theorem}%
\newtheorem{lem}[thm]{Lemma}
\newtheorem{prop}[thm]{Proposition}
\newtheorem{cor}{Corollary}

\theoremstyle{remark}
\newtheorem{rem}{Remark}
\newtheorem{example}{Example}

\definecolor{plot0}{HTML}{2db7d2}
\definecolor{plot1}{HTML}{23373b}
\definecolor{plot2}{HTML}{EB811B}
\definecolor{plot3}{HTML}{14B03D}

\title{Reliability Bounds for\\Dependent Fading Wireless Channels}
\author{Karl-Ludwig Besser, \IEEEmembership{Student Member, IEEE} and Eduard A. Jorswieck, \IEEEmembership{Fellow, IEEE}%
\thanks{The authors are with the Institute of Communications Technology, Technische Universit\"at Braunschweig, 38016 Braunschweig, Germany (email: k.besser@tu-bs.de; e.jorswieck@tu-bs.de).}
\thanks{This work is supported in part by the German Research Foundation (DFG) under grant JO\,801/23-1.}
}

\newcommand\copyrighttext{%
	\scriptsize \copyright 2020 IEEE. Personal use of this material is permitted. Permission from IEEE must be obtained for all other uses, in any current or future media, including reprinting/republishing this material for advertising or promotional purposes, creating new collective works, for resale or redistribution to servers or lists, or reuse of any copyrighted component of this work in other works.}
\newcommand\copyrightnotice{%
	\begin{tikzpicture}[remember picture,overlay]
	\node[anchor=south,yshift=5pt] at (current page.south) {\fbox{\parbox{\dimexpr\textwidth-\fboxsep-\fboxrule\relax}{\copyrighttext}}};
	\end{tikzpicture}%
}

\begin{document}
\maketitle

\begin{abstract}
	Unreliable fading wireless channels are the main challenge for strict performance guarantees in mobile communications. Diversity schemes including massive number of antennas, huge spectrum bands and multi-connectivity links are applied to improve the outage performance. The success of these approaches relies heavily on the joint distribution of the underlying fading channels. In this work, we consider the $\varepsilon$-outage capacity of slowly fading wireless diversity channels and provide lower and upper bounds for fixed marginal distributions of the individual channels. This answers the question about the best and worst case outage probability achievable over $n$ fading channels with a given distribution, e.g., Rayleigh fading, but not necessarily statistically independent. Interestingly, the best-case joint distribution enables achieving a zero-outage capacity greater than zero without channel state information at the transmitter for $n \geq 2$. Furthermore, the results are applied to characterize the worst- and best-case joint distribution for zero-outage capacity with perfect channel state information everywhere.
	All results are specialized to Rayleigh fading and compared to the standard assumption of independent and identically distributed fading component channels. The results show a significant impact of the joint distribution and the gap between worst- and best-case can be arbitrarily large.
\end{abstract}

\begin{IEEEkeywords}
Diversity methods, Fading channels, Network reliability, Joint distributions, Outage capacity.
\end{IEEEkeywords}
\copyrightnotice

\section{Introduction}
With advances in communication technology, more critical applications start to rely on wireless transmission, e.g., car-to-car communication and medical applications~\cite{Andrews2014}. These areas make high demands on reliability. Therefore, research started to focus on problems like ultra-reliable communications where very low error-rates of less than $10^{-3}$ are required~\cite{Bennis2018}. In order to understand the trade-offs and efficient operating points for ultra-reliable communications, \cite{Bennis2018} develops a framework by listing enabling technologies and methods as well as their application in the use cases 1) enhanced Mobile Broadband (eMBB), 2) massive Machine-Type Communication (mMTC), and 3) \gls{urllc}~\cite{ITU-T}.

In mobile wireless settings, the communication channel is usually modeled as a (slow) fading channel~\cite{Biglieri1998}. In this type of channel, one cannot transmit code words with an arbitrarily small error probability even for an infinite blocklength~\cite[Chap.~5]{Tse2005}. Therefore, the $\varepsilon$-outage capacity is used as a performance metric. It is defined as the largest transmission rate for which the outage probability is still less than $\varepsilon$~\cite{Tse2005}.

Various techniques to lower the outage probability or to increase the  diversity order are developed \cite{Wornell1998} including all available dimensions, temporal diversity, spectral diversity \cite{Nguyen2019}, spatial and multi-connectivity diversity \cite{Simsek2019}. The underlying idea is simple: if one symbol or codeword travels along several fading paths to the receiver, the probability that it is received correctly is increased. However, this conclusion depends heavily on the underlying joint probability density. 

In order to motivate the work in this paper, consider the situation where we have a standard probability space $(\Omega, \mathcal{F}, \mathbb{P})$ with $n$ random variables $X_1,...,X_n$, each with fixed univariate marginal distribution $X_{i} \sim {F}_i$, $i=1,...,n$. \emph{Question 1: What can we say about the distribution of $S=X_1+...+X_n$ when the dependence structure among the random variables is arbitrary?}

In our reliable communication over fading channels scenario, this corresponds to the situation where we have collected many fading channel gain $X_i$ measurements at $n$ specific points in space and/or frequency, from which we can deduce the marginal distribution ${F}_i$ of the fading channels at these points. \emph{Question 2: What can we say about the achievable outage performance $\mathbb{P}(S<r)$ for $r>0$ and $S=X_1+...+X_n$, if we employ a receive diversity system at these points and, e.g., perform coherent combining?}

Obviously both Questions 1 and 2 above are closely related. Let us illustrate the importance of these questions with a simple dichotomy example from \cite{Jorswieck2019} with $n=2$ and Rayleigh fading channels, i.e., standard exponentially distributed random variables $X_1$ and $X_2$ for the so-called zero-outage capacity: if $X_1$ and $X_2$ are completely dependent, i.e., $X_1=X_2$ with probability one, then the zero-outage capacity is zero, while if $X_1$ and $X_2$ are completely negatively dependent, the zero-outage capacity is strictly larger than zero.  

For frequency diversity, \cite{Haber1974} and later \cite{Akki1985}, have shown that it is possible to design negatively correlated branches by selecting frequency differences properly. Indeed, average error probabilities for \gls{fsk} receiver and square-law combiner \cite{Haber1974}, coherent and non-coherent \gls{fsk}, and differential and coherent PSK with maximum ratio combining \cite{Akki1985} are shown to vanish at certain frequency spacings. In \cite{Hamdi2008}, it is observed that negatively correlated branches in Rician fading can lead to an increase in the ergodic capacity compared to the uncorrelated case. Different diversity combiner schemes in lognormal fading channels are analyzed in \cite{Zhu2018} and it is observed that negatively correlated lognormal channels can outperform the independent channels. 

From the theoretical side, the impact of the joint distribution on different performance metrics is studied in the context of entropy in \cite{Cover1994} and for performance bounds in \cite{Biglieri2016}. The problem of optimizing the joint distribution given marginal distributions can be formulated by the copula\footnote{A copula is basically a mapping from a set of marginal distributions to a joint distribution.} approach \cite{Nelsen2006}. Recently, \cite{Lin2017} exploits copulas and coupling to derive stochastic orders for multi-user fading channels to characterize ergodic capacity regions of multi-user channels \cite{Lin2019}.

However, most of the previous work on performance of diversity systems in fading wireless channels only considers the case of independent or positively correlated fading processes \cite{Alouini2002}. In this work, we will partly answer the Question 2 above and provide bounds for cases which are relevant for the fading wireless channels. However, please note that the related Question 1 is still open in general, even for the case $n=2$ \cite{Wang2016}. 

In this work, we adapt fundamental results from operations research to the reliability of wireless communication links and based on this transfer, it is possible to obtain the characterization of worst- and best-case fading correlations in diversity systems. We provide both upper and lower bounds on various performance metrics for fading wireless channels with monotone marginal densities allowing dependent fading processes. The performance metrics include $\varepsilon$-outage capacity and, as a special case, the zero-outage capacity.
The main contributions are summarized in the following.
\begin{itemize}
	\item In Theorem~\ref{thm:bounds-outage-identical}, we provide lower and upper bounds for fading channels where all fading coefficients $\abs{h_i}^2\sim F$ follow the same distribution with a monotone density. The bounds are tight and are achieved by a certain joint distribution.
	\item We show the surprising results that the best-case distribution achieves strictly positive values for all monotone fading gain distributions.
	\item Theorem~\ref{thm:bounds-outage-different} extends the result from Theorem~\ref{thm:bounds-outage-identical} to channels with arbitrary and different monotone marginal densities $\abs{h_i}^2\sim F_i$. In general, these bounds are not tight. 
	\item For perfect \gls{csit}, the best and worst-case are shown where the worst case corresponds to completely dependent, i.e., comonotonic fading channels. 
	\item The derived results are applied to the typical Rayleigh fading model and the bounds are stated explicitly, including channels with different large-scale fading. 
	\item From the numerical assessment, the two most interesting observations are 1) the large impact of the joint distribution on the $\varepsilon$-outage capacity and 2) for the best-case joint distribution, the zero-outage capacity, with and without \gls{csit}, converges with increasing diversity dimension $n$. 
\end{itemize}

The paper is organized as follows. In Section~\ref{sec:preliminaries}, preliminaries and the system model are introduced. The main results of bounds on the $\varepsilon$-outage capacity with only \gls{csir} are presented in Section~\ref{sec:bounds}. Results for the zero-outage capacity with additional \gls{csit} are shown in Section~\ref{sec:perfect-csit}. An example of the application of all results for the special case of Rayleigh fading channels is given in Section~\ref{sec:rayleigh} including numerical assessment of the impact of different system parameters on the bounds and the state of the art, mainly the \gls{iid} case. Section~\ref{sec:conclusion} concludes the paper.

\subsection*{Notation}
Throughout this work, we will use $F$ and $f$ for a probability distribution and its density, respectively. The expected value is denoted by $\mathbb{E}$ and the probability by $\Pr$. The function $G=\inv{F}$ denotes the inverse of $F$. It is assumed that all considered distributions are continuous. Real-valued Gaussian and complex circularly symmetric Gaussian random variables with mean $\bm{\mu}$ and covariance matrix $\Sigma$ are denoted as $\mathcal{N}(\bm{\mu}, \Sigma)$ and $\mathcal{CN}(\bm{\mu}, \Sigma)$, respectively. The uniform distribution on the interval $[a, b]$ is denoted as $\mathcal{U}[a, b]$.

\section{Preliminaries and System Model}\label{sec:preliminaries}

\subsection{System Model}
Throughout this work, we consider the complex flat fading channel~\cite[Ch.~5.4]{Tse2005} with $n$ receive dimensions. These could be $n$ antennas placed spatially or $n$ distributed receivers like in multi-connectivity, or $n$ time or frequency instances, over which the same symbol $x$ is transmitted. The received signal at discrete time $k$ is given by the vector $\bm{y}$ with $n$ components as
\begin{equation}\label{eq:signal-model}
\bm{y}[k] = \bm{h}[k]x[k] + \bm{w}[k],
\end{equation}
where $\bm{h}[k]=[h_1[k], \dots{}, h_n[k]]$ represents the fading channel and $\bm{w}[k]\sim\mathcal{CN}(0, N_0)$ is \gls{iid} complex Gaussian noise with zero mean and variance $N_0$. In the following, we will drop the time instance $k$ since we assume that $\bm{h}$ is constant for the code word length and the corresponding rate expressions are achievable.

If the transmitter has no \gls{csit}, and the receiver has perfect \gls{csi} on the slow-fading channel, then the definition of the $\varepsilon$-outage capacity $R^{\varepsilon}$ for this channel model is given as \cite{Tse2005}
\begin{equation}\label{eq:def-outage}
R^{\varepsilon} = \sup_{R\geq 0}\left\{R \in \mathbb{R}: \Pr\left(\sum_{i=1}^{n} \abs{h_i}^2 <\frac{2^R-1}{\rho}\right)<\varepsilon\right\}
\end{equation}
for a certain \gls{snr} $\rho$. The \gls{snr} of the channel is defined as $P/N_0$ where $P$ is the transmit power. The $\varepsilon$-outage capacity is the transmission rate which results in an outage probability of at most $\varepsilon$, if the next channel realization is used. 

One extreme case of $\varepsilon$-outage capacity is the zero-outage capacity \cite{Li2001a} or delay-limited capacity \cite{Hanly1998}, where $\varepsilon$ is set to zero. This means the rate $R^0$ is achievable \emph{for all channel realizations}. It is well known that without \gls{csit}, this is not possible (at least for \gls{iid} fading with the non-negative real numbers as support), while with perfect \gls{csit} it can be achieved with a long-term power constraint and channel inversion \cite{Goldsmith1997}. More general, it is known that the zero-outage capacity only depends on the support of the joint distribution~\cite{Verdu1994}.
	
\subsection{Problem Statements}	
	
In this work, we are interested in the following problem statements: 
\begin{description}
	\item[P1] For no instantaneous \gls{csit} and perfect \gls{csir}, and for fixed and known marginal distributions of the channel gains $|h_1|^2, ..., |h_n|^2$, what is the worst-case and best-case $\varepsilon$-outage capacity over all possible joint distributions? 
	\item[P2] For perfect \gls{csi} at both transmitter and receiver and long-term power constraint, and for fixed marginal distributions of the channel gains, what is the worst-case and best-case zero-outage capacity over all possible joint distributions? 
\end{description}	
As a follow up question, we are also interested in the comparison between the two cases of zero-outage capacity. 

\subsection{Bounds on the Outage Performance or Risk}	
	
In many different areas, the sum of multiple random variables with (unspecified) dependency structure is of interest. An example is the total risk in risk management \cite{Embrechts2010}. Since the dependency between the different variables is usually unknown, bounds like the worst and the best case are of particular interest. In \cite{Wang2013}, the authors provide bounds on the probability of the sum of dependent variables $X_i\sim F_i$ with monotone marginals
\begin{equation}
m(s) = \inf_{F_{X_1, \dots{}, X_n}}\left\{\Pr\left(\sum_{i=1}^{n}X_i < s\right): X_i\sim F_i\right\}\,,
\end{equation}
where a distribution $F_i$ is referred to as monotone marginal, if its density is monotone on its support.
In \cite{Wang2013}, a bound for $m(s)$ is provided, which we restate in the following theorem.
\begin{thm}[{\cite[Thm.~2.6.1]{Wang2013}}]
	
	Suppose the distributions $F_1, \dots{}, F_n$ are continuous, then the following holds
	\begin{equation}\label{eq:relation-Phi-m}
		m(s) \geq \inv{\Phi}(s)\,,
	\end{equation}
	where $\Phi$ is the conditional moment function, given as 
	\begin{equation}\label{eq:def-Phi}
	\Phi(a) = \sum_{i=1}^{n}\expect[X_i\sim F_i]{X_i | X_i \geq G_i(a)}\,,
	\end{equation}
	with $G_i=\inv{F_i}$ being the inverse of the \gls{cdf} $F_i$.
\end{thm}

For the homogeneous case $F_1 = \cdots{} = F_n = F$ with monotone density, the authors of \cite{Wang2013} provide a formulation of a function $\phi$ such that \eqref{eq:relation-Phi-m} is fulfilled with equality. In order to do this, they prove the existence of an optimal coupling between the dependent variables by a joint distribution $Q_{n}^{F}$.
In order to construct this joint distribution and the function $\phi$, we need the following functions $c_n$ and $H$.
For decreasing densities, $c_{n}$ and $H$ are defined as \eqref{eq:cmin-decreasing}
\begin{figure*}[b]
	\noindent\rule{\textwidth}{.5pt}
	\begin{equation}
	\label{eq:cmin-decreasing}%
	c_{n}(a) = \min\left\{c\in\left[0, \frac{1-a}{n}\right]: \int_{c}^{\frac{1-a}{n}} H_{a}(t)\diff{t} \geq \left(\frac{1-a}{n} - c\right) H_{a}(c)\right\}
	\end{equation}
\end{figure*}
and
\begin{equation}\label{eq:h-decreasing}
	H_{a}(x) = (n-1) G(a+(n-1)x) + G\left(1-x\right),
\end{equation}
with $a\in\left[0, 1\right]$~\cite{Wang2013}.

For increasing densities, $c_{n}$ and $H$ are given as \eqref{eq:cmin-increasing}
\begin{figure*}[t]
	\begin{equation}
	\label{eq:cmin-increasing}
	c_{n}(a) = \min\left\{c\in\left[0, \frac{1-a}{n}\right]: \int_{c}^{\frac{1-a}{n}} H_{a}(t)\diff{t} \leq \left(\frac{1-a}{n} - c\right) H_{a}(c)\right\}
	\end{equation}
\end{figure*}
and
\begin{equation}\label{eq:h-increasing}
	H_{a}(x) = G(a+x) + (n-1)G\left(1-(n-1)x\right),
\end{equation}
with $a\in\left[0, 1\right]$~\cite{Wang2013}.

Combining \eqref{eq:cmin-decreasing}-\eqref{eq:h-increasing} into one function $\phi$, gives the following definitions for decreasing densities
\begin{equation}
	\label{eq:phi-decreasing}
	\phi(a) = \begin{cases}
	H_{a}(c_{n}(a)) & \text{if } c_{n}(a) > 0\\
	n\expect{X|X>G(a)} & \text{if } c_{n}(a) = 0
	\end{cases},
\end{equation}
and increasing densities
\begin{equation}
	\label{eq:phi-increasing}
	\phi(a) = \begin{cases}
	H_{a}(0) & \text{if } c_{n}(a) > 0\\
	n\expect{X|X>G(a)} & \text{if } c_{n}(a) = 0
	\end{cases}.
\end{equation}

As mentioned earlier, the inverse of $\phi$ gives the tight lower bound on the probability of the sum of the random variables. We restate this central result from \cite{Wang2013} in the following theorem. Please refer to \cite{Wang2011,Wang2013} for the proofs and further details.
\begin{thm}[{\cite[Thm.~3.4]{Wang2013}}]
	Suppose the distribution function $F$ has a decreasing density on its support and $\phi(a)$ is defined in \eqref{eq:phi-decreasing}, or $F$ has an increasing density on its support and $\phi(a)$ is defined in \eqref{eq:phi-increasing}. Then
	\begin{equation}\label{eq:phi-equal-m}
		\inv{\phi}(s) = m(s)%
	\end{equation}
	holds.
\end{thm}

The important result in \eqref{eq:phi-equal-m} characterizes the worst-case joint distribution in computable, almost closed-form. 

In~\cite[Sect.~3]{Wang2011}, a specific joint distribution $Q_n^F(c)$ for $n \geq 2$ is constructed for some $0 \leq c \leq \frac{1}{n}$, where $(U_1,\dots{},U_n) \sim Q_n^F(c)$ satisfy one of the following two cases:
\begin{enumerate}
\item For each $i=1,...,n$, the joint density of $(U_1,\dots{},U_n)$ given $U_i \in [0,c]$ is uniformly supported on line segments $u_j=1-(n-1)u_i$, for all $j \neq i$, and 
\item $G_1(U_1)+\dots{}+G_n(U_n)$ is a constant when $U_i \in (c,1-(n-1)c)$ for any $i=1,...,n$.
\end{enumerate}
The intuition behind this construction is that for realizations of $U_i$ which lie in the medium area $(c,1-(n-1)c)$ a complete mix is created for which the sum of the realizations is constant. While outlier realizations $U_i \in [0,c]$ close to zero, are compensated by the other $U_j$, $j \neq i$.  
In~\cite[Thm.~3.4]{Wang2011}, the authors use this joint distribution $Q_{n}^{F}$ to solve the following minimization problem.

\begin{thm}[{\cite[Thm.~3.4]{Wang2011}}]
	Suppose $F$ is a distribution function with increasing density. Then for any convex function $g$
	\begin{equation}\label{eq:minimization-problem-wang}
	\min_{X_1, \dots{}, X_n\sim F}\expect{g(X_1+\cdots{}+X_n)} = \expect[Q_{n}^{F}]{g(X_1 + \cdots{}+X_n)}
	\end{equation}
	holds.
\end{thm}

If $X\sim F$ has a decreasing density $f(x)$, $-X$ has an increasing density. In the following, $\phi_{-}$ denotes the function $\phi$ from \eqref{eq:phi-decreasing} and \eqref{eq:phi-increasing} for the random variable $-X\sim F$ with the density $f(-x)$. Note that if $\phi$ is defined according to \eqref{eq:phi-decreasing}, $\phi_{-}$ will be evaluated according to \eqref{eq:phi-increasing}.
More details on the calculations and relations between the different functions are listed in the appendix.

\section{\texorpdfstring{Bounds on the $\varepsilon$-Capacity for Monotone Marginals and Perfect CSI-R}{Bounds on the Epsilon-Capacity for Monotone Marginals and Perfect CSI-R}}
\label{sec:bounds}
In this section, upper and lower bounds for fading channels with fixed marginals and possibly dependent fading coefficients are derived. Throughout this section, only perfect \gls{csir} is assumed. By this we are able to answer problem statement \textbf{P1} completely. Let us stress that the solution builds on, but also extends the results from operations research reviewed in Section \ref{sec:preliminaries} significantly, because we are able to consider the typical properties of our fading channel gain distributions. Therefore, we can characterize the upper and lower bounds easier and more efficiently compute interesting asymptotic cases.

Throughout this section, we will give short examples illustrating how the derived theorems can be used to actually calculate the bounds on the $\varepsilon$-outage capacity. You can also find all calculations in detail and with interactive plots at \cite{BesserGitlab}.

\subsection{Identical Marginals}
\label{sub:identical-marginals}
At first, we will derive an upper and a lower bound on the outage capacity for the scenario where all fading coefficients $h_i$ have the same marginal distribution, i.e., $\abs{h_i}^2\sim F, i=1, \dots{}n$. 

\begin{thm}[Bounds on the $\varepsilon$-outage capacity for identical monotone marginals]\label{thm:bounds-outage-identical}
	The $\varepsilon$-capacity $R^{\varepsilon}$ of $n$ multi-connectivity fading links with monotone marginal distributions $\abs{h_i}^2\sim F$ and perfect \gls{csir} can be bounded by
	\begin{equation}
		\worstout \leq R^{\varepsilon} \leq \bestout \,.
	\end{equation}
	The worst-case $\varepsilon$-capacity $\worstout$ is given as
	\begin{equation}\label{eq:worst-outage}
		\worstout(\rho) = \log_2\left(1-\rho \cdot \phi_{-}(1-\varepsilon)\right),
	\end{equation}
	and the best-case $\varepsilon$-capacity $\bestout$ is given as
	\begin{equation}\label{eq:best-outage}
		\bestout(\rho) = \log_2\left(1 + \rho \cdot \phi(\varepsilon)\right),
	\end{equation}
	where $\rho$ is the \gls{snr} and $\phi$ is defined in Eq.~\eqref{eq:phi-decreasing} and \eqref{eq:phi-increasing}.
\end{thm}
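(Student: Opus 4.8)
The plan is to read both bounds as extremal versions of the outage definition \eqref{eq:def-outage}: writing $S=\sum_{i=1}^n\abs{h_i}^2$ and $s=(2^R-1)/\rho$, the best- and worst-case $\varepsilon$-capacities are the largest rates $R$ for which the outage probability $\Pr(S<s)$ can be kept below $\varepsilon$ under the most, respectively least, favorable joint distribution. I would then reduce both to the computable minimal-probability function $m(s)$ and invoke the identity $\inv{\phi}(s)=m(s)$ from \eqref{eq:phi-equal-m}.

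First I would handle the best case. Since $m(s)$ is by definition the infimum of $\Pr(S<s)$ over all couplings with the common marginal $F$, the smallest achievable outage at threshold $s$ is exactly $m(s)$, so $\bestout$ is the supremum of rates with $m(s)<\varepsilon$. Substituting $m(s)=\inv{\phi}(s)$ and using that $\phi$ is nondecreasing (hence $\inv{\phi}$ is nondecreasing), the condition $\inv{\phi}(s)<\varepsilon$ is equivalent to $s<\phi(\varepsilon)$. Solving $(2^R-1)/\rho<\phi(\varepsilon)$ for $R$ yields \eqref{eq:best-outage} directly. The continuity of $F$ ensures the supremum over rates behaves well at the boundary between strict and nonstrict inequalities.

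Next, for the worst case the adversary chooses the coupling that \emph{maximizes} the outage, so I need the supremum $M(s)=\sup\Pr(S<s)$ rather than the infimum $m(s)$. The key device is the reflection $Y_i=-X_i$: because $\abs{h_i}^2\geq 0$ has a monotone density, each $Y_i$ has the opposite monotonicity, and for continuous $F$ one has $\sup\Pr(\sum X_i<s)=1-\inf\Pr(\sum Y_i<-s)=1-m_Y(-s)$. Applying the same theorem to the reflected variables gives $m_Y(-s)=\inv{\phi_{-}}(-s)$, with $\phi_{-}$ exactly the function introduced before the statement for the reflected density. The worst-case condition $M(s)<\varepsilon$ then reads $\inv{\phi_{-}}(-s)>1-\varepsilon$, i.e. $s<-\phi_{-}(1-\varepsilon)$, and solving for $R$ produces \eqref{eq:worst-outage}; here $\phi_{-}(1-\varepsilon)$ is nonpositive, so the logarithm's argument stays above one, as required.

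I expect the main obstacle to be the worst-case direction, specifically justifying the passage from $\sup\Pr(\sum X_i<s)$ to $1-m_Y(-s)$ rigorously. This requires that reflection carries an optimal coupling for the minimization problem of the $Y_i$ into an optimal coupling for the maximization problem of the $X_i$, and that continuity of $F$ lets me interchange $\Pr(\sum Y_i\leq -s)$ with $\Pr(\sum Y_i<-s)$ without loss. Verifying the monotonicity of $\phi$ (and hence its invertibility) together with the correct sign bookkeeping for $\phi_{-}$ is the remaining care needed; everything else is the algebraic rearrangement of the logarithm.
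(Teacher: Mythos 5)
Your proposal is correct and follows essentially the same route as the paper's own proof: both bounds are reduced to the extremal probability functions $m(s)$ and $M(s)$, the best case is settled by the identity $m(s)=\inv{\phi}(s)$ from \eqref{eq:phi-equal-m}, and the worst case is handled by reflecting to $-X_i$ so that $M(s)=1-\inv{\phi_{-}}(-s)$, then solving $(2^R-1)/\rho$ for $R$. If anything, you are slightly more careful than the paper, since you explicitly flag the strict-versus-nonstrict inequality issue (the optimal couplings give the sum an atom) and the sign of $\phi_{-}$, both of which the paper passes over silently.
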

\begin{proof}
	The proof consists of two parts in which the lower and upper bound are derived. 
	
	1) At first, we will prove the lower bound.
	For the outage-capacity defined in \eqref{eq:def-outage}, the upper bound is determined by the worst-case\footnote{Note that the worst-case outage probability is the supremum while the best-case outage probability corresponds to the infimum in $m(s)$.} probability,
	\begin{equation*}
		M(s) = \sup\{\Pr(S < s): X_i\sim F\},
	\end{equation*}
	with $X_i=\abs{h_i}^2$, $S=\sum_{i=1}^{n} X_i$, and $s=\frac{2^R-1}{\rho}$.
	
	This yields the following expression
	\begin{align}
		s^{\star}(\worstout) &= \sup_{s\geq 0}\left\{s: M(s) < \varepsilon\right\}\\
		&= \sup_{s\geq 0}\left\{s: 1-\inf\{\Pr(-S < -s)\}<\varepsilon\right\}\\
		&= \sup_{s\geq 0}\left\{s: 1-\inv{\phi_{-}}(-s)<\varepsilon\right\}
	\end{align}
	where $\inv{\phi_{-}}$ denotes the inverse of the function $\phi$ from \eqref{eq:phi-decreasing} and \eqref{eq:phi-increasing} for $-X_i\sim F$. The last transformation is satisfied by \eqref{eq:phi-equal-m}.	%
	The supremum is attained for equality
	\begin{equation*}
		1-\inv{\phi_{-}}(-s^{\star}) = \varepsilon,
	\end{equation*}
	and therefore
	\begin{equation*}
		s^{\star} = -\phi_{-}(1-\varepsilon) = \frac{2^{\worstout}-1}{\rho}\;.
	\end{equation*}
	Solving for $\worstout$ gives \eqref{eq:worst-outage}. %

	2) Next, we will prove the upper bound in an analogue way.
	Instead of an upper bound on the probability, a lower bound is used,
	\begin{equation}
		m(s) = \inf\{\Pr(S < s): X_i\sim F\}.
	\end{equation}
	
	This gives the following expressions of $s^{\star}$
	\begin{align}
		s^{\star}(\bestout) &= \sup_{s\geq 0}\left\{s: m(s) < \varepsilon\right\}\\
		&= \sup_{s\geq 0}\left\{s: \inv{\phi}(s) < \varepsilon\right\}\\
		&= \phi(\varepsilon)\\
		&= \frac{2^{\bestout}-1}{\rho}.
	\end{align}
	Solving for \bestout $\;$ gives \eqref{eq:best-outage}.
\end{proof}

Since \eqref{eq:phi-equal-m} holds with equality, the derived bounds in Theorem~\ref{thm:bounds-outage-identical} are tight, i.e., an optimal coupling exists for which the $\varepsilon$-outage capacity is achieved.

\begin{example}\label{ex:linear-pdf-1}
	For illustrating the results from Theorem~\ref{thm:bounds-outage-identical}, we give a simple example. Consider the linearly decreasing \gls{pdf} from zero to a maximal value $b>0$, i.e., $f_{X_i}(x)=\frac{2}{b}\left(1-\frac{x}{b}\right), x\in[0,b]$. The inverse \gls{cdf} is then given as $G(x)=\inv{F}(x)=b(1-\sqrt{1-x})$. The function $\phi$ that we need for $\bestout$, is calculated according to \eqref{eq:phi-decreasing}. $H_a(x)$ is given by \eqref{eq:h-decreasing} as $H_a(x)=b (1 - \sqrt{x}) + b (n-1) (1 - \sqrt{1-a-(n-1) x})$. Solving the inequality in \eqref{eq:cmin-decreasing} shows that $c=0$ always fulfills it for $n\geq 3$ and therefore $c_n(a)=0$. According to \eqref{eq:phi-decreasing}, $\phi$ now follows as $\phi(a)=n\expect{X|X>G(a)}=nb\left(2 - \frac{1}{1-a} - \frac{2}{3} \sqrt{1-a} + \frac{a}{1 - a}\right)$.
	
	We can now plug this into \eqref{eq:best-outage} to get $\bestout$. The same steps can be repeated for the negative distribution with $f_{-}(x)=\frac{2}{b}\left(1+\frac{x}{b}\right), x\in[-b,0]$.
\end{example}

In communications, we usually have to deal with decreasing densities over the non-negative real numbers, e.g., Rayleigh fading which will be discussed in Section~\ref{sec:rayleigh}.
For this type of distributions, more precise statements about the bounds in Theorem~\ref{thm:bounds-outage-identical} can be given.

\begin{prop}\label{prop:zero-outage-decreasing}
	The upper bound on the zero-outage capacity $\overline{R^{0}}$ is always positive in the case of monotone decreasing densities over the non-negative real numbers and perfect \gls{csir}.
\end{prop}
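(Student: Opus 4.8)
The plan is to reduce the claim to a single statement about $\phi(0)$ and then verify positivity directly from the definition of $\phi$ for decreasing densities. Setting $\varepsilon=0$ in the best-case expression \eqref{eq:best-outage} gives $\overline{R^{0}}(\rho)=\log_2\!\left(1+\rho\,\phi(0)\right)$. Since the \gls{snr} satisfies $\rho>0$ and $\log_2(1+y)>0$ exactly when $y>0$, it suffices to show that $\phi(0)>0$.

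First I would recall that for a distribution $F$ supported on the non-negative reals the quantile function $G=\inv{F}$ is non-negative and strictly increasing, with $G(0)\geq 0$ equal to the left endpoint of the support. The argument then splits according to the two branches of $\phi$ in \eqref{eq:phi-decreasing}, evaluated at $a=0$.

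In the branch $c_{n}(0)=0$ we have $\phi(0)=n\,\expect{X\mid X>G(0)}$. Because $X=\abs{h_i}^2$ is a non-negative, non-degenerate random variable, its conditional expectation on the event $\{X>G(0)\}$ is strictly positive, so $\phi(0)>0$ immediately. In the branch $c_{n}(0)>0$ we have $\phi(0)=H_{0}(c_{n}(0))$, which by \eqref{eq:h-decreasing} equals $(n-1)\,G\!\bigl((n-1)c_{n}(0)\bigr)+G\!\bigl(1-c_{n}(0)\bigr)$. Here both terms are non-negative since $G\geq 0$, and the second term is strictly positive: from $c_{n}(0)\leq\tfrac{1}{n}$ we get $1-c_{n}(0)\geq 1-\tfrac{1}{n}>0$ for $n\geq 2$, hence $G\!\bigl(1-c_{n}(0)\bigr)>G(0)\geq 0$ by strict monotonicity. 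Thus $\phi(0)>0$ in this case as well.

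The main obstacle — really the only point requiring care — is handling the two-case definition of $\phi$ and securing strict rather than merely weak positivity. The conditional-expectation branch is immediate, and for the $H_{0}$ branch the decisive observation is that the quantile argument $1-c_{n}(0)$ stays bounded away from $0$, which forces $G\!\bigl(1-c_{n}(0)\bigr)$ to be strictly positive even when the channel gain can be arbitrarily close to zero. Combining both cases yields $\phi(0)>0$ and therefore $\overline{R^{0}}(\rho)>0$.
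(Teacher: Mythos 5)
Your proof is correct and follows essentially the same route as the paper's: reduce the claim to $\phi(0)>0$ via \eqref{eq:best-outage} and split on the two branches of \eqref{eq:phi-decreasing} at $a=0$. In fact, your treatment of the $c_{n}(0)>0$ branch—using $c_{n}(0)\leq\tfrac{1}{n}$ to keep $1-c_{n}(0)$ bounded away from zero, so that $G\bigl(1-c_{n}(0)\bigr)>0$ by strict monotonicity—is somewhat more careful than the paper's own argument, which appeals only loosely to the boundary values $G(0)=0$ and $G(1)$.
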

\begin{proof}
	Using \eqref{eq:best-outage}, the statement to prove
	\begin{equation*}
	\overline{R^{0}}(\rho) = \log_2\left(1 + \rho \cdot \phi(0)\right) > 0
	\end{equation*}
	is equivalent to
	\begin{equation*}
	\phi(0) > 0.
	\end{equation*}
	For decreasing densities, $\phi$ is defined according to \eqref{eq:phi-decreasing}. First, we will prove $\phi(0)>0$ for the $c_n(0)=0$.
	
	1) In the case of $c_n(0)=0$, $\phi(0)$ is given as $n\expect{X|X>G(0)}$. Since $G$ is the inverse of the \gls{cdf}, this can be simplified to the expectation $n\expect{X}$. This is greater than zero by the assumptions we had on the distribution.
	
	2) In the case of $c_n(0)>0$, $\phi(0)$ is given as
	\begin{equation*}
		H_0(c_n(0)) = (n-1)G((n-1)c_n(0)) + G(1-c_n(0)) > 0.
	\end{equation*}
	Since $G$ is the inverse of the \gls{cdf}, we observe the following properties for the assumed distributions. The first extreme case is given as $G(0)=0$, since our support are the non-negative real numbers. And second, if the support is upper-bounded, the value $G(1)$ is equal to that value. Otherwise $G(1)$ goes to infinity. In both cases, the above $H_0(c_n(0))$ is greater than zero.
\end{proof}

\begin{prop}\label{prop:zero-outage-lower}
	The lower bound on the zero-outage capacity $\underline{R^{0}}$ is always zero in the case of monotone decreasing densities over the non-negative real numbers and perfect \gls{csir}.
\end{prop}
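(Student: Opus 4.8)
The plan is to reduce the claim to a statement about the worst-case outage probability and then settle it by exhibiting a single bad coupling. By the worst-case expression \eqref{eq:worst-outage} of Theorem~\ref{thm:bounds-outage-identical}, $\underline{R}^{0}(\rho) = \log_2(1-\rho\,\phi_{-}(1))$, so it suffices to show $\phi_{-}(1)=0$, equivalently that the threshold $s^{\star}=-\phi_{-}(1)$ appearing in the proof of the theorem equals zero. Recall that $s^{\star}$ is the largest $s$ for which the worst-case outage probability $M(s)=\sup\{\Pr(S<s):X_i\sim F\}$ vanishes, with $S=\sum_{i=1}^{n}X_i$ and $X_i=\abs{h_i}^2\geq 0$.

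First I would note that $M(0)=0$ trivially, since $S\geq 0$ forces $\Pr(S<0)=0$ for every coupling. The crux is to show $M(s)>0$ for every $s>0$. Because $M(s)$ is a supremum over couplings, it is enough to lower-bound it by the value attained at one admissible coupling. Since the marginals are identical, the comonotone coupling $X_1=\cdots=X_n$ is admissible, and then $S=nX_1$ so that $\Pr(S<s)=F(s/n)$. A monotone decreasing density on $[0,\infty)$ attains its maximum at the origin, hence $f(0)>0$ and $F$ is strictly increasing there; consequently $F(s/n)>0$ for all $s>0$. This yields $M(s)\geq F(s/n)>0$, so the set $\{s\geq 0:M(s)=0\}$ reduces to $\{0\}$, giving $s^{\star}=0$ and therefore $\underline{R}^{0}=\log_2(1)=0$.

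As an independent check, one can verify $\phi_{-}(1)=0$ directly from \eqref{eq:phi-increasing}, since $-X$ has an increasing density. At $a=1$ the admissible interval $[0,(1-a)/n]$ collapses to $\{0\}$, forcing $c_{n}(1)=0$, so $\phi_{-}(1)=n\,\expect{-X\mid -X>\tilde{G}(1)}$, where $\tilde{G}(u)=-G(1-u)$ is the quantile function of $-X$. The main obstacle lives exactly here: $\tilde{G}(1)=-G(0)=0$ and $-X\leq 0$, so the conditioning event $\{-X>0\}$ has probability zero and the expression is degenerate at the boundary. I would resolve this by approaching $a\uparrow 1$, writing $\expect{-X\mid -X>\tilde{G}(a)}=-\expect{X\mid X<G(1-a)}$ and using that $\expect{X\mid X<t}\to 0$ as $t\downarrow 0$; the estimate $\int_{0}^{t}x f(x)\diff{x}\sim\tfrac12 f(0)t^{2}$ against $F(t)\sim f(0)t$ shows the ratio tends to $0$ precisely because $f(0)>0$. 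This recovers $\phi_{-}(1)=0$ and matches the probabilistic argument, which I regard as the cleaner route since it sidesteps the degenerate conditioning altogether.
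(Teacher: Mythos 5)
Your proof is correct, and its main argument is genuinely different from the paper's. The paper proceeds analytically: it writes out the closed form of $\phi_{-}$ for a decreasing density (cases $c^{-}_{n}(a)>0$ and $c^{-}_{n}(a)=0$), evaluates it at $a=1$, and reads off $\phi_{-}(1)=0$ from $G(0)=0$, which together with \eqref{eq:worst-outage} gives the claim. You argue operationally: using the identification (from the proof of Theorem~\ref{thm:bounds-outage-identical}, via \eqref{eq:phi-equal-m}) of $-\phi_{-}(1)$ with the largest $s$ at which $M(s)=\sup\{\Pr(S<s):X_i\sim F\}$ vanishes, you exhibit one admissible coupling --- the comonotonic one, $X_1=\cdots=X_n$ --- under which $\Pr(S<s)=F(s/n)>0$ for all $s>0$ (positivity of $F$ near the origin is indeed forced by a decreasing density), hence $s^{\star}=0$ and $\underline{R^{0}}=0$. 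What your route buys: it sidesteps evaluating the closed form at the boundary, where the paper's case $c^{-}_{n}=0$ reads $-n\expect{X|X<G(0)}$ and conditions on a null event --- a degeneracy the paper glosses over with \enquote{it can easily be seen}; your secondary check resolves it properly by noting that $c^{-}_{n}(1)=0$ is forced (the admissible interval $[0,(1-a)/n]$ collapses at $a=1$) and passing to the limit $a\uparrow 1$, where incidentally the crude bound $0\leq\expect{X|X<t}\leq t$ already suffices, without the $f(0)$ asymptotics. Your argument also names a concrete capacity-killing dependence structure, consistent with the comonotonic Rayleigh example \eqref{eq:outage-comonotonic-rayleigh}, which likewise vanishes at $\varepsilon=0$. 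What the paper's route buys in exchange: it is a one-line evaluation inside the same formula framework used throughout, and it does not lean on the operational meaning of the bound at $\varepsilon=0$ --- a point your main argument does rely on, since the supremum in \eqref{eq:def-outage} is vacuous under strict inequality when $\varepsilon=0$, so reading $s^{\star}$ as $\sup\{s:M(s)=0\}$ is an (entirely reasonable) interpretation step that your definitional check then makes rigorous.
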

\begin{proof}
	If $X_i\sim F$ with density $f(x)$, then $-X_i\sim F$ has the density $f_{-}(x)=f(-x)$.
	Starting with this relation, $\phi_{-}$ can be written as
	\begin{equation}
	\phi_{-}(a) = \begin{cases}
	-\left(G(1-a) + (n-1) G(0)\right) & \text{if } c^{-}_{n}(a) > 0\\
	-n\expect{X|X<G(1-a)} & \text{if } c^{-}_{n}(a) = 0
	\end{cases}\,,
	\end{equation}
	where $c^{-}_{n}(a)$ is the only value evaluated according to $f_{-}$.
	
	For the lower bound in \eqref{eq:worst-outage} from Theorem~\ref{thm:bounds-outage-identical}, $\phi_{-}$ is evaluated at $1-\varepsilon$. Observing that $G(0) = 0$, it can easily be seen that for zero-outage
	\begin{equation*}
		\phi_{-}(1) = 0.
	\end{equation*} 
	Combining this with \eqref{eq:worst-outage} completes the proof.
\end{proof}
\subsection{Different Marginals}
\label{sub:different-marginals}
We now extend the results from the previous section to the case where all coefficients are allowed to have different monotone densities, i.e., $\abs{h_i}^2\sim F_i$. Again, we are interested in the best- and worst-case bounds.

\begin{thm}[Bounds on the $\varepsilon$-outage capacity for arbitrary monotone marginals]\label{thm:bounds-outage-different}
	The $\varepsilon$-capacity $R^{\varepsilon}$ of $n$ multi-connectivity fading links with monotone marginal distributions $\abs{h_i}^2\sim F_i$ and perfect \gls{csir} can be lower bounded by
	\begin{equation}\label{eq:worst-outage-different}
		\worstout(\rho) = \log_2\left(1-\rho \cdot \Phi_{-}(1-\varepsilon)\right),
	\end{equation}
	and upper bounded by
	\begin{equation}\label{eq:best-outage-different}
		\bestout(\rho) = \log_2\left(1 + \rho \cdot \Phi(\varepsilon)\right),
	\end{equation}
	where $\rho$ is the \gls{snr} and $\Phi_{-}$ is the function $\Phi$ from \eqref{eq:def-Phi} for $-\abs{h_i}^2\sim F_i$.
\end{thm}
\begin{proof}
	The proofs are identical to the ones of Theorem~\ref{thm:bounds-outage-identical}. The only difference is that $\Phi$ is used instead of $\phi$. This is done using \eqref{eq:relation-Phi-m}, which gives looser bounds in Theorem~\ref{thm:bounds-outage-different} compared to the ones in Theorem~\ref{thm:bounds-outage-identical}.
\end{proof}

It can be clearly seen that the structure of the bounds in Theorems~\ref{thm:bounds-outage-identical} and \ref{thm:bounds-outage-different} is the same. The only difference is the use of $\Phi$ instead of $\phi$. Combining \eqref{eq:relation-Phi-m} and \eqref{eq:phi-equal-m} gives $\inv{\phi}\geq \inv\Phi$, which shows that the bounds in Theorem~\ref{thm:bounds-outage-different} are looser compared to the ones in Theorem~\ref{thm:bounds-outage-identical}. However, \eqref{eq:phi-equal-m} only holds for the homogeneous case ($F_1=F_2=\dots{}=F$). Therefore, Theorem~\ref{thm:bounds-outage-different} can be applied in the more general case of arbitrary monotone marginals.

\begin{example}\label{ex:linear-pdf-2}
	Consider the same \gls{pdf} as in Example~\ref{ex:linear-pdf-1} but now with different $b_i>0$, i.e., $f_{X_i}(x_i)=\frac{2}{b_i}\left(1-\frac{x_i}{b_i}\right), x_i\in[0,b_i]$. The function $\Phi$, as defined in \eqref{eq:def-Phi}, needs the conditional expectation which is calculated for our example as $\expect{X_i|X_i>G_i(a)}=b_i\left(1 - \frac{2}{3} \sqrt{1-a}\right)$. The upper bound is now evaluated according to \eqref{eq:best-outage-different} as
	\begin{equation*}
	\bestout(\rho) = \log_2\left(1+\rho\left(1-\frac{2}{3} \sqrt{1-\varepsilon}\right)\sum_{i=1}^{n} b_i\right)\,.
	\end{equation*}
	Again, the calculations are analogue of $\Phi_{-}$ and $\worstout$. You can find an interactive step-by-step version at \cite{BesserGitlab}.
\end{example}

\begin{rem}\label{rem:bounds-overlap}
	For $c_n(a)=0$, $\phi$ defined in \eqref{eq:phi-decreasing} and \eqref{eq:phi-increasing} reduces to $n\expect{X|X>G(a)}$. This is obviously very similar to the definition of $\Phi$ in \eqref{eq:def-Phi}. In fact, they are the same in the homogeneous case. In this case, the bounds from Theorem~\ref{thm:bounds-outage-identical} and \ref{thm:bounds-outage-different} are identical. An example where this happens will be given in the next section.
\end{rem}

Proposition~\ref{prop:zero-outage-decreasing} can also be extended to arbitrary monotone decreasing marginals on the non-negative real numbers as follows.
\begin{prop}\label{prop:upper-bound-positive-different}
	The upper bound on the zero-outage capacity $\overline{R^{0}}$ is always positive in the case of $n$ arbitrary monotone decreasing densities over the non-negative real numbers and perfect \gls{csir}.
\end{prop}
\begin{proof}
	Similar to Theorems~\ref{thm:bounds-outage-identical} and \ref{thm:bounds-outage-different}, the proof is analogue to the one of Proposition~\ref{prop:zero-outage-decreasing}.
	However, in this case, we only need to prove that $\Phi(0) > 0$.
	
	Following the definition of $\Phi$ from \eqref{eq:def-Phi} and the argumentation in the proof of Proposition~\ref{prop:zero-outage-decreasing} yields
	\begin{equation*}
	\expect[X_i\sim F_i]{X_i|X_i > G_i(0)} = \expect[X_i\sim F_i]{X_i} > 0,
	\end{equation*}
	where the last inequality follows from the assumptions on the marginal distributions. 
\end{proof}

Please note that Proposition~\ref{prop:upper-bound-positive-different} only states that the upper bound on the zero-outage capacity is positive. Since the bound might be loose due to \eqref{eq:relation-Phi-m}, this does not necessarily mean that the actual zero-outage capacity is positive in the best case.
\section{Zero-Outage Capacity for Perfect Channel State Information}\label{sec:perfect-csit}

A special case of the $\varepsilon$-outage capacity, which is of particular interest, is the zero-outage capacity. In this section, we will take a closer look at it and show how additional perfect \gls{csi} at the transmitter can improve the bounds from the previous section, where only perfect \gls{csi} at the receiver was assumed.
Throughout this section, we will assume that the fading coefficients $\abs{h_i}^2$ have the same monotone decreasing density over the non-negative real numbers.
An example for Rayleigh fading will be given in Section~\ref{sub:perfect-csi-rayleigh}.

As shown in \cite{Jorswieck2007}, the zero outage capacity $R^{0}$ with perfect \gls{csi} at the transmitter can be written as
\begin{equation}\label{eq:zero-out-perfect-csi}
R^{0}_{\text{CSIT}} = \log_2\left(1+\rho \left(\expect{\dfrac{1}{\sum_{i=1}^{n}\abs{h_i}^2}}\right)^{-1} \right)\,.
\end{equation}

\subsection{Best Case}
The best case is obtained when $\expect{\frac{1}{\sum_{i=1}^{n}\abs{h_i}^2}}$ is minimized. This problem is solved in \cite{Wang2011}, where the authors show that the minimum of the expected value of a convex function of the sum of random variables with a monotone density is attained for a particular dependence structure, see (\ref{eq:minimization-problem-wang}).  

\begin{lem}\label{lem:zero-out-perfect-csi-best}
	The best-case zero-outage capacity for fading channels with monotone decreasing densities and perfect \gls{csi} at the transmitter is given by
	\begin{equation}\label{eq:zero-out-perfect-csi-best}
	\overline{R^{0}_{\text{CSIT}}} = \log_2\left(1+\frac{\rho\cdot{}H(c_n)}{H(c_n) n \int_{1-c_n}^{1}\frac{1}{H(1-x)}\diff{x} + 1-nc_n}\right)\,,
	\end{equation}
	where $H$ and $c_n$ are given by $H_0$ from \eqref{eq:h-decreasing} and $c_n(0)$ from \eqref{eq:cmin-decreasing}, respectively.
\end{lem}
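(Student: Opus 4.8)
The plan is to read \eqref{eq:zero-out-perfect-csi} as a strictly decreasing, monotone transformation of the single scalar quantity $\expect{1/S}$, where $S=\sum_{i=1}^{n}\abs{h_i}^2$, and to reduce the lemma to a minimization of this expectation. Since $R^{0}_{\text{CSIT}}=\log_2\!\left(1+\rho\,(\expect{1/S})^{-1}\right)$ decreases in $\expect{1/S}$, the \emph{best}-case (largest) zero-outage capacity is attained exactly when $\expect{1/S}$ is made as small as possible over all admissible joint distributions with the given marginals. The first step is then to observe that $g(s)=1/s$ is convex on the positive reals, which places the objective squarely in the class of functions handled by \eqref{eq:minimization-problem-wang}. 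Consequently, the minimizing coupling is the copula $Q_{n}^{F}$, and $\min\expect{1/S}=\expect[Q_{n}^{F}]{1/S}$. I would also record here that $S>0$ almost surely under $Q_{n}^{F}$, using $G(0)=0$ and $H(c_n)>0$ (as in Proposition~\ref{prop:zero-outage-decreasing}), so that $1/S$ is bounded and the expectation is finite.

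Next I would evaluate $\expect[Q_{n}^{F}]{g(S)}$ for $g(s)=1/s$ using the explicit structure of $Q_{n}^{F}$ collected in Appendix~\ref{app:minimization-problem}. Under this coupling the sum $S$ decomposes into two regimes: on the ``mixed'' event, of probability $1-nc_n$, the sum is the constant $H(c_n)$, with $H=H_0$ from \eqref{eq:h-decreasing} and $c_n=c_n(0)$ from \eqref{eq:cmin-decreasing}; on the $n$ symmetric ``outlier'' events, each of probability $c_n$, the sum is non-constant and can be parametrized so that $S=H(1-x)$ as $x$ sweeps $[1-c_n,1]$ with unit density. Assembling the constant part (a point mass contributing $(1-nc_n)/H(c_n)$) with the $n$ outlier integrals yields
\begin{equation*}
	\expect[Q_{n}^{F}]{\frac{1}{S}} = \frac{1-nc_n}{H(c_n)} + n\int_{1-c_n}^{1} \frac{1}{H(1-x)}\diff{x}.
\end{equation*}

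The final step is to substitute this value into \eqref{eq:zero-out-perfect-csi} and combine the two terms over the common denominator $H(c_n)$, which directly produces the claimed closed form \eqref{eq:zero-out-perfect-csi-best}. The routine algebra here is the inversion $(\expect{1/S})^{-1}=H(c_n)\big/\big(H(c_n)\,n\!\int_{1-c_n}^{1}\tfrac{1}{H(1-x)}\diff{x}+1-nc_n\big)$. I expect the main obstacle to be the second step: correctly reading off the value of $S$ on the outlier segments, including the precise identification of which coordinate plays the role of the large compensating value and the resulting unit Jacobian, so that the per-region contribution is exactly $\int_{1-c_n}^{1}\frac{1}{H(1-x)}\diff{x}$ rather than a reparametrized variant. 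Once the appendix pins down this outlier contribution for a generic convex $g$, specializing to $g(s)=1/s$ and the substitution into \eqref{eq:zero-out-perfect-csi} are immediate.
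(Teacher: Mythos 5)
Your proposal is correct and follows essentially the same route as the paper: reduce the best case to minimizing $\expect{1/S}$ via monotonicity, invoke the convexity of $g(s)=1/s$ together with \eqref{eq:minimization-problem-wang} so that the optimal coupling is $Q_n^F$, evaluate the expectation through the same two-regime decomposition (constant sum $H(c_n)$ with probability $1-nc_n$ plus $n$ symmetric outlier integrals $\int_{1-c_n}^{1}\frac{1}{H(1-x)}\diff{x}$, exactly as in Appendix~\ref{app:minimization-problem}), and substitute into \eqref{eq:zero-out-perfect-csi}. Your added observation that $S\geq H(c_n)>0$ under $Q_n^F$, so the expectation is finite, is a small extra check the paper leaves implicit.
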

\begin{proof}
	Using the minimization problem in \eqref{eq:minimization-problem-wang} and \cite[Proof of Prop.~3.1]{Wang2013}, the relation in \eqref{eq:perfect-csi-integral-expect-g} at the bottom of the page can be derived, where $g$ is a convex function.%
	\begin{figure*}[b]
		\noindent\rule{\textwidth}{.5pt}
		\begin{equation}\label{eq:perfect-csi-integral-expect-g}
		\min_{\abs{h_i}^2\sim F} \expect{g\left(\sum_{i=1}^{n}\abs{h_i}^2\right)} = n\int_{1-c_n(0)}^{1}g\left(H_0(1-x)\right)\diff{x} + g\left(H_0(c_n(0))\right)\left(1-nc_n(0)\right)
		\end{equation}
	\end{figure*}
	The detailed derivation can be found in Appendix~\ref{app:minimization-problem}.
	According to \eqref{eq:zero-out-perfect-csi}, we set $g(x)=\frac{1}{x}$ for our considered problem, and get \eqref{eq:zero-out-perfect-csi-best}.
\end{proof}

Based on this lemma, we are able to make the following statement about the best-case zero-outage capacities in the scenarios with and without perfect \gls{csi} at the transmitter.
\begin{thm}\label{thm:gap-best-case-zero-out-csit}
	In the case of fading coefficients $\abs{h_i}^2$ with monotone decreasing densities, perfect \gls{csi} at the transmitter can increase the highest achievable zero-outage capacity
	\begin{equation}\label{eq:zero-out-best-perfect-csi-improvement}
	\overline{R^{0}_{\text{CSIT}}} \geq \overline{R^{0}}
	\end{equation}
\end{thm}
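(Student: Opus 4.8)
\emph{The plan is to} peel off the common outer structure of the two capacity expressions and reduce the claim to a single scalar inequality between the ``effective channel gains'' inside the logarithm. Both quantities have the form $\log_2(1+\rho\,\cdot)$ with a positive argument: from \eqref{eq:best-outage} with $\varepsilon=0$ we have $\overline{R^{0}}(\rho)=\log_2(1+\rho\,\phi(0))$, while \eqref{eq:zero-out-perfect-csi} together with Lemma~\ref{lem:zero-out-perfect-csi-best} gives $\overline{R^{0}_{\text{CSIT}}}(\rho)=\log_2\bigl(1+\rho\,(\min_{\abs{h_i}^2\sim F}\expect{1/S})^{-1}\bigr)$, where $S=\sum_{i=1}^{n}\abs{h_i}^2$. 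Since $x\mapsto\log_2(1+\rho x)$ is strictly increasing for $\rho>0$, the assertion \eqref{eq:zero-out-best-perfect-csi-improvement} is equivalent to
\begin{equation*}
\Bigl(\min_{\abs{h_i}^2\sim F}\expect{\tfrac{1}{S}}\Bigr)^{-1}\;\geq\;\phi(0).
\end{equation*}

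\emph{Next I would} give $\phi(0)$ its probabilistic meaning. By \eqref{eq:phi-equal-m} the best-case outage CDF is $m(s)=\inv{\phi}(s)=\inf\{\Pr(S<s):\abs{h_i}^2\sim F\}$, and the construction of \cite{Wang2011,Wang2013} realizes this infimum simultaneously for all $s$ through a single copula $Q_{n}^{F}(c_n(0))$. Evaluating at $s=\phi(0)$ yields $\Pr_{Q}(S<\phi(0))=\inv{\phi}(\phi(0))=0$, so under this coupling $Q$ the sum satisfies $S\geq\phi(0)$ almost surely; equivalently, $\phi(0)$ is the largest essential infimum of $S$ attainable over all admissible joint distributions. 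Moreover $\phi(0)>0$ by Proposition~\ref{prop:zero-outage-decreasing}, so this bound is nontrivial and the reciprocals below are well defined.

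\emph{The heart of the argument is then} the elementary inequality relating a harmonic-type mean to an essential infimum. Under the coupling $Q$ above we have $1/S\leq 1/\phi(0)$ almost surely, hence $\expect[Q]{1/S}\leq 1/\phi(0)$ and therefore $(\expect[Q]{1/S})^{-1}\geq\phi(0)$. Because the best-case CSIT gain is the maximum of $(\expect{1/S})^{-1}$ over all admissible couplings, it dominates the value attained at the particular coupling $Q$, giving $(\min\expect{1/S})^{-1}\geq(\expect[Q]{1/S})^{-1}\geq\phi(0)$, which is exactly the reduced inequality; reinserting it into the logarithm proves \eqref{eq:zero-out-best-perfect-csi-improvement}. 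In fact the minimizer in Lemma~\ref{lem:zero-out-perfect-csi-best} is the same copula $Q_{n}^{F}(c_n(0))$, so one may work with a single coupling throughout and the inequality chain collapses to the one-line estimate $\expect{1/S}\le 1/\phi(0)$.

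\emph{The step I expect to require the most care} is the identification of $\phi(0)$ with the \emph{attainable} essential infimum of $S$ — that the copula achieving $m(s)=\inf\Pr(S<s)$ makes $S\geq\phi(0)$ hold almost surely, rather than merely in a limiting quantile sense. This rests on the fact, imported from \cite{Wang2011,Wang2013}, that the single optimal copula $Q_{n}^{F}(c_n(0))$ attains the infimum for \emph{every} $s$. A minor separate case is $c_n(0)=0$: there complete mixing makes $S=n\expect{X}$ constant, so $\expect{1/S}=1/(n\expect{X})$ and the reduced inequality holds with equality, which is consistent with the claim that CSIT can only help.
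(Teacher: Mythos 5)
Your proposal is correct, and it reaches the theorem by a genuinely different route than the paper. The paper's proof is computational: after dismissing the trivial case $c_n=0$, it inserts the evaluated expression \eqref{eq:zero-out-perfect-csi-best} from Lemma~\ref{lem:zero-out-perfect-csi-best} and the no-CSIT value $\overline{R^{0}}=\log_2\left(1+\rho H_0(c_n(0))\right)$, reduces \eqref{eq:zero-out-best-perfect-csi-improvement} to the scalar inequality $H(c_n)\, n\int_{1-c_n}^{1}\frac{1}{H(1-x)}\diff{x} \leq n c_n$, and verifies this via the mean value theorem (the integral equals $c_n/H(1-\xi)$ for some $\xi\in[1-c_n,1]$) combined with \cite[Lemma~3.3]{Wang2013}, which says $H$ is decreasing on $[0,c_n]$ with minimum at $c_n$. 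You never open up the CSIT expression at all: you use only (i) attainability of the best-case outage bound, i.e., the existence of one admissible coupling $Q$ under which $S\geq\phi(0)$ almost surely --- exactly the attained statement $m(\phi(0))=\inv{\phi}(\phi(0))=0$, and precisely what the paper's appendix derivation of the law of $S$ under $Q_n^F(c_n(0))$ (an atom at $H_0(c_n)$ plus mass on $\{H_0(t):t\in[0,c_n]\}$, all values at least $H_0(c_n)$) delivers; (ii) the one-line bound $\expect[Q]{1/S}\leq 1/\phi(0)$; and (iii) suboptimality of $Q$ in the minimization of $\expect{1/S}$. This is shorter, treats $c_n=0$ and $c_n>0$ uniformly, avoids the mean value theorem and even the evaluation in Lemma~\ref{lem:zero-out-perfect-csi-best} (you only need the characterization of the best case as the minimizer of $\expect{1/S}$), and it makes the mechanism transparent: CSIT can only improve matters because the no-CSIT optimum already forces $S\geq\phi(0)$ surely, and the harmonic mean of any such $S$ is at least $\phi(0)$. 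What the paper's calculation buys in exchange is quantitative information: the explicit denominator in \eqref{eq:zero-out-perfect-csi-best} measures the size of the gap, which the paper later uses to explain why the gap vanishes as $c_n\to 0$ for growing $n$. One caveat on your write-up: the parenthetical claim that the single copula $Q_n^F(c_n(0))$ attains $m(s)$ \emph{simultaneously for all} $s$ overstates what \cite{Wang2013} provides --- the optimal coupling for level $s$ depends on $s$ through $a=\inv{\phi}(s)$ --- but since your argument only ever evaluates at $s=\phi(0)$, this over-claim is harmless.
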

\begin{proof}
	For $c_n=0$, the proof is trivial.
	
	For $c_n>0$, $\overline{R^{0}_{\text{CSIT}}}$ in \eqref{eq:zero-out-perfect-csi-best} can be easily compared to the best-case zero-outage capacity $\overline{R^{0}}$ in the case of no \gls{csi} at the transmitter from Theorem~\ref{thm:bounds-outage-identical}
	\begin{equation*}
	\overline{R^{0}} = \log_2\left(1 + \rho\cdot{}H_0(c_n(0))\right)\,.
	\end{equation*}
	We observe that \eqref{eq:zero-out-best-perfect-csi-improvement} holds, if
	\begin{equation*}
		1\geq H(c_n) n \int_{1-c_n}^{1}\frac{1}{H(1-x)}\diff{x} + 1-n c_n\,.
	\end{equation*}
	Using some minor manipulations as well as the mean value theorem, this can be restated as
	\begin{equation}
	n c_n \geq n c_n \frac{H(c_n)}{H(1-\xi)},
	\end{equation}
	where $\xi\in[1-c_n, 1]$ is the mean value (and therefore $1-\xi\in[0, c_n]$). As stated in \cite[Lemma~3.3]{Wang2013}, $H(x)$ is decreasing for $x\in[0, c_n]$ and attains its minimum at $c_n$. Therefore, $H(c_n)\leq H(1-\xi)$ holds, which completes the proof.
\end{proof}

\subsection{Worst Case}
We need the maximum of $\expect{\frac{1}{X_1 + \cdots{} + X_n}}$ for minimizing the zero-outage capacity with perfect \gls{csi} at the transmitter. The following corollary provides the corresponding joint distribution which achieves this maximum. 

\begin{cor}\label{cor:maximum-expect-inverse-sum}
	The maximum of $\expect{\frac{1}{X_1 + \cdots{} + X_n}}$ is achieved for comonotonic $(X_1,...,X_n)$, i.e., for $F(X_1,...,X_n) = \min \{ F_1(X_1),..., F_n(X_n) \}$, or alternatively if $X_i \sim F^{-1}_i(U)$ for $i=1,...,n$ and uniform $U \sim \mathcal{U}[0,1]$.
\end{cor}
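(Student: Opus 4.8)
The plan is to recognize the statement as a convex-order extremal problem. Since $g(x)=1/x$ is convex on the positive reals, the corollary asserts that, among all joint distributions with the prescribed marginals $F_i$, the comonotone coupling maximizes $\expect{g(S)}$ with $S=X_1+\cdots+X_n$. Crucially, every admissible coupling yields the same mean $\expect{S}=\sum_{i=1}^{n}\expect{X_i}$, so the problem is purely one of maximizing the dispersion of $S$ in the convex (equivalently, stop-loss) order. This is exactly the direction dual to the minimization problem \eqref{eq:minimization-problem-wang}: there the sum is made as constant as possible via complete mixability, whereas here it must be made as variable as possible, which one expects to be achieved by complete positive dependence.

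The core fact I would invoke is that, for fixed marginals, the comonotone coupling $X_i=\inv{F_i}(U)$ driven by a single $U\sim\mathcal{U}[0,1]$ is the maximal element in the supermodular order, its joint distribution being the Fr\'echet--Hoeffding upper bound $\min\{F_1(x_1),\dots,F_n(x_n)\}$. Because $g$ is convex, the map $(x_1,\dots,x_n)\mapsto g(x_1+\cdots+x_n)$ is supermodular, since for $i\neq j$ its mixed second derivative equals $g''(x_1+\cdots+x_n)\geq 0$. Supermodular maximality then yields $\expect{g(S)}\leq\expect{g(S^{c})}$ directly, where $S^{c}=\sum_{i=1}^{n}\inv{F_i}(U)$. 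For $n=2$ this maximality is precisely the Hardy--Littlewood--P\'olya rearrangement inequality (Tchen's theorem); the extension to arbitrary $n$ is the only substantial step and can be handled either by the standard supermodular comparison theory or by induction, coupling the comonotone partial sum $\sum_{i<n}\inv{F_i}(U)$ --- itself a nondecreasing function of $U$, hence comonotone with the last summand --- to $X_n=\inv{F_n}(U)$ and applying the two-variable inequality to $g$ evaluated at their sum.

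Putting the pieces together gives $\expect{1/S}\leq\expect{1/S^{c}}$ for every admissible coupling, with equality attained by the comonotone one; since that coupling is characterized by $X_i=\inv{F_i}(U)$ for a common uniform $U$, its joint distribution function is the Fr\'echet upper bound $\min\{F_1(x_1),\dots,F_n(x_n)\}$, establishing the claimed maximizer. The main obstacle I anticipate is making the supermodular (convex-order) maximality of $S^{c}$ fully rigorous for general $n$, in particular verifying in the induction that the comonotone partial sum and the next summand remain jointly comonotone so that the two-variable rearrangement result applies. A secondary technical point is integrability: when the densities extend down to zero, as for Rayleigh fading, one has $\inv{F_i}(0)=0$, so $S^{c}\to 0$ with $U\to 0$ and $1/S^{c}$ need not be integrable, whence the maximum may equal $+\infty$. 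This is consistent with --- and in fact explains --- the corollary's role, since through \eqref{eq:zero-out-perfect-csi} it forces the worst-case zero-outage capacity with \gls{csit} to collapse to zero.
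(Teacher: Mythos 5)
Your proposal is essentially a reconstruction of the proof of the result that the paper does not prove at all: the paper's ``proof'' of this corollary is a one-line citation of \cite[Thm.~2.1]{puccetti2015} (see also \cite[Sec.~4.1]{Wang2016}), and that cited theorem is precisely the statement you set out to establish, namely that the comonotone coupling (the Fr\'echet--Hoeffding upper bound) maximizes $\expect{c(X_1,\dots,X_n)}$ over all couplings with fixed marginals whenever $c$ is supermodular. Your reduction is the standard one and is correct: $g(x)=1/x$ is convex on the positive reals, a convex function of a sum is supermodular (your mixed-derivative check works for smooth $g$ and extends to general convex $g$ via the lattice inequality $g(s_{\vee})+g(s_{\wedge})\geq g(s_x)+g(s_y)$, since $s_{\vee}+s_{\wedge}=s_x+s_y$ and $[s_x\wedge s_y,\, s_x\vee s_y]\subseteq[s_{\wedge},s_{\vee}]$), and supermodular maximality of comonotonicity then gives the claim. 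So the difference between you and the paper is self-containedness versus rigor-by-reference; your route is more instructive, the paper's is shorter and delegates the genuinely nontrivial $n$-variate comparison theorem to the literature.

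One caution on your inductive variant, which as sketched has a gap you only partially flag: after applying the two-variable Tchen/Lorentz inequality to the pair $(T,X_n)$ with $T=X_1+\cdots+X_{n-1}$, you are left with the comonotone sum $\inv{F_T}(U)+\inv{F_n}(U)$, and the induction hypothesis only tells you $T$ is dominated by the comonotone partial sum $T^c$ in convex order. To conclude you additionally need that convex order is preserved under comonotone addition of a common summand, together with quantile additivity of comonotone sums, $\inv{F_{T^c}}(u)=\sum_{i<n}\inv{F_i}(u)$; both are true but neither is stated in your sketch, so the clean path is your first alternative, i.e.\ invoking the general supermodular comparison theorem directly. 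Finally, your integrability remark is a genuinely good catch that the paper leaves implicit: for marginals with $\inv{F_i}(0)=0$ (e.g.\ exponential), $\expect{1/S^c}=+\infty$, so the ``maximum'' is attained only in the extended sense, and this is exactly what forces $\underline{R^{0}_{\text{CSIT}}}=0$ in Section~\ref{sub:perfect-csi-rayleigh}.
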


\begin{proof}
The corollary follows from \cite[Theorem 2.1]{puccetti2015}. It is also mentioned in \cite[Section 4.1]{Wang2016}.  
\end{proof}

The comonotonicity of $(X_1,...,X_n)$ means that all variables are completely dependent or completely coupled \cite{Thorisson1998}, i.e., $U_1=\dots{}=U_n=U$ with probability one and $X_i=G_i(U)$. Instead of diversity, this provides only the $n$-fold sum of the signal.
\section{Rayleigh Fading Channels}\label{sec:rayleigh}
In this section, the special case of Rayleigh fading is considered. This channel model is commonly used for modeling mobile channels~\cite{Biglieri1998}.

Besides Rayleigh fading, other fading distributions with monotone marginals are known, e.g., Weibull-fading \cite{Yacoub2002} and $\alpha$-$\mu$-fading \cite{Alexandropoulos2007}. In this section, we only show the example of Rayleigh fading in detail, however, all calculations can be adapted for other fading distributions.

In this case, the amplitude of the fading coefficients $\abs{h_i}$ follows a Rayleigh distribution.
Recall that the $\varepsilon$-outage capacity can be determined as
\begin{equation}
	R^{\varepsilon} = \sup_{R\geq 0}\left\{R \in \mathbb{R}: \Pr\left(\sum_{i=1}^{n} \abs{h_i}^2 <\frac{2^R-1}{\rho}\right)<\varepsilon\right\}.
\end{equation}
Since $\abs{h_i}$ is Rayleigh distributed, $\abs{h_i}^2$ is exponentially distributed with $\abs{h_i}^2\sim \exp(\lambda_i)$~\cite[Ch.~39]{Forbes2010}.

The exponential distribution has a monotone decreasing density $f(x)=\lambda\exp(-\lambda x)$ which allows us to apply Theorems~\ref{thm:bounds-outage-identical} and \ref{thm:bounds-outage-different} to determine lower and upper bounds for the outage capacity. 

The expressions for the different functions necessary to compute the following examples are listed in Appendix~\ref{app:rayleigh-fading-example}.
The details, including Python code, for reproducing all of the calculations in this section can be found at~\cite{BesserGitlab}.

\subsection{Identical Distributions}
At first, the case of perfect \gls{csir} and where all $\lambda_i=1$ are the same is considered. In this case, Theorem~\ref{thm:bounds-outage-identical} can be applied. In the following, the resulting bounds on the $\varepsilon$-outage capacity are shown along with the \gls{iid} case.

\subsubsection{Upper Bound}
The upper bound $\bestout$ on the $\varepsilon$-outage capacity is calculated according to \eqref{eq:best-outage}. In order to do this, $\phi$ needs to be calculated according to \eqref{eq:phi-decreasing} since $\abs{h_i}^2\sim\exp(1)$ has a decreasing density. Hence, $c_n(a)$ needs to be determined according to \eqref{eq:cmin-decreasing}.
In the case of exponentially distributed variables, this yields the following inequality
\begin{equation}\label{eq:cmin-rayleigh-decreasing}
	 (a-1)\log\left(\frac{1-a-(n-1)c}{c}\right) \geq n(a+cn-1).
\end{equation}
However, this cannot be solved by a closed-form expression but rather numerically. Figure~\ref{fig:cmin-exp} shows the calculated values of $c_n(a)$ for different $n$ over $a$.
\begin{figure}
	\centering
	\begin{tikzpicture}%
\begin{axis}[
	xlabel={$a$},
	ylabel={$c_{n}(a)$},
	width=\linewidth,
	height=.3\textheight,
	legend entries={{$n=3$}, {$n=5$}, {$n=10$}},
	legend style={at={(0.98,0.45)},anchor=north east},
	ymode=log,
	ymajorgrids,
	xmajorgrids,
	xminorgrids,
	grid style={line width=.1pt, draw=gray!20},
	major grid style={line width=.25pt,draw=gray!40},
]
\addplot[thick, plot0] table[x=a,y=cmin] {data/cmin-exp-n3.dat};
\addplot[thick, plot1, dashed] table[x=a,y=cmin] {data/cmin-exp-n5.dat};
\addplot[thick, plot2, dashdotted] table[x=a,y=cmin] {data/cmin-exp-n10.dat};
\end{axis}
\end{tikzpicture}%
	\caption{Computed values of $c_n(a)$ according to \eqref{eq:cmin-decreasing} over $a$ for different $n$ with $\abs{h_i}^2\sim\exp(1), i=1, \dots{}, n$.}
	\label{fig:cmin-exp}
\end{figure}
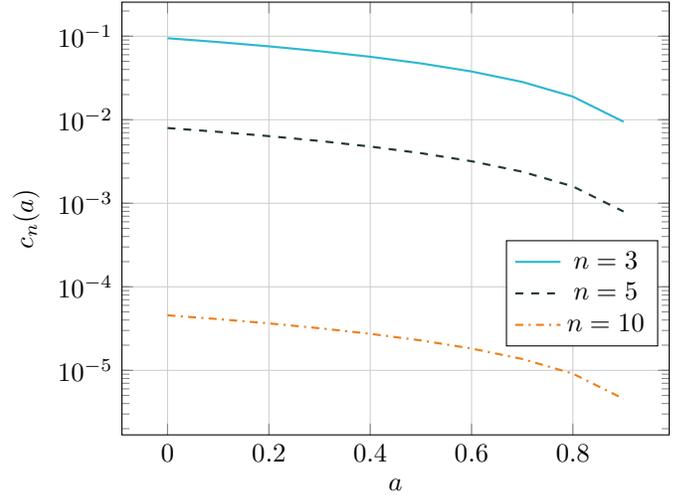
After determining $c_n(a)$, they are used to calculate $\phi$ and $\bestout$. For our considered case, $\phi$ can be expressed as
\begin{equation}\label{eq:phi-rayleigh-identical}
\phi(a) = \begin{cases}
	H_{a}(c_{n}(a)) & \text{if } c_{n}(a) > 0\\
	n(1-\log(1-a)) & \text{if } c_{n}(a) = 0
\end{cases}\,.
\end{equation}
For $a\in[0, 1)$, it can be seen from \eqref{eq:cmin-rayleigh-decreasing} that $c_n(a)>0$. Therefore, $\phi$ can be simplified to
\begin{align}
	\phi(a) &= H_{a}(c_{n}(a))\notag\\ 
			&= -(n-1)\log(1-a-(n-1)c_{n}(a)) - \log(c_{n}(a)).\label{eq:phi-rayleigh-decreasing}
\end{align}

An example for Rayleigh fading with $n=5$ is presented in Fig.~\ref{fig:outage-rayleigh} and \ref{fig:outage-rayleigh-low-eps} later.
As described in Proposition~\ref{prop:zero-outage-decreasing}, the capacity in the best case is always positive and even the zero-outage capacity has a value of around $4.06$ bits per channel use. It increases with increasing $\varepsilon$ and tends towards infinity for $\varepsilon\to 1$. Please note that the usual operating point is $\varepsilon<10^{-2}$ or smaller for \gls{urllc}.

\begin{cor}[Best-case Zero-Outage Capacity of Rayleigh Fading Links]
	For $n$ Rayleigh fading links with perfect \gls{csir}, the best-case zero-outage capacity $\overline{R^{0}}$ is upper-bounded by
	\begin{equation}\label{eq:zero-outage-rayleigh-upper-bound-best}
	\overline{R^{0}}(\rho) \leq \log_2\left(1+\rho\cdot{}n\log(n)\right).
	\end{equation}
\end{cor}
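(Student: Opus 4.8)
The plan is to reduce the statement to a single scalar inequality about $\phi$ and then exploit the defining property of $c_n(0)$ together with the convexity of $H_0$. By the best-case formula \eqref{eq:best-outage} with $\varepsilon=0$ we have $\overline{R^{0}}(\rho)=\log_2(1+\rho\,\phi(0))$, and since $\log_2(1+\rho\,\cdot)$ is increasing it suffices to prove $\phi(0)\le n\log n$. For the marginal $\exp(1)$ the quantile is $G(u)=-\log(1-u)$, the density is decreasing, and, as already noted, $c_n(0)>0$; hence \eqref{eq:phi-rayleigh-decreasing} gives $\phi(0)=H_0(c_n(0))$ with $H_0(x)=-(n-1)\log(1-(n-1)x)-\log x$. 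Evaluating $H_0$ at the right endpoint of the admissible interval produces exactly the target value, $H_0(1/n)=-(n-1)\log(1/n)-\log(1/n)=n\log n$, so the whole claim comes down to showing $H_0(c_n(0))\le H_0(1/n)$.

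First I would extract the exact equation satisfied by $c_n(0)$. From \eqref{eq:cmin-decreasing}, $c_n(0)$ is the smallest $c\in[0,1/n]$ with $\int_c^{1/n}H_0(t)\,\diff t\ge(1/n-c)H_0(c)$; writing $L(c)$ for the difference of the two sides, $L$ is continuous with $L(0^+)=-\infty$ (because $H_0(c)\to+\infty$ as $c\to0$), so at the smallest feasible point we must have $L(c_n(0))=0$, i.e.
\begin{equation*}
\int_{c_n(0)}^{1/n}H_0(t)\,\diff t=\Big(\tfrac1n-c_n(0)\Big)H_0(c_n(0)).
\end{equation*}
In words, $H_0(c_n(0))$ is the average of $H_0$ over $[c_n(0),1/n]$. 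The second ingredient is that $H_0$ is strictly convex on $(0,1/(n-1))$: differentiating twice gives $H_0''(x)=(n-1)^3/(1-(n-1)x)^2+1/x^2>0$, and in particular $H_0$ is strictly convex on $[c_n(0),1/n]$.

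The key step is then a short contradiction argument. Suppose, to the contrary, that $H_0(c_n(0))>H_0(1/n)$. For any $t=(1-\theta)c_n(0)+\theta/n$ with $\theta\in(0,1)$, strict convexity gives $H_0(t)<(1-\theta)H_0(c_n(0))+\theta H_0(1/n)$, and under the assumption the right-hand side is strictly below $H_0(c_n(0))$; thus $H_0(t)<H_0(c_n(0))$ for every $t\in(c_n(0),1/n)$. Integrating over $[c_n(0),1/n]$ would then force the average of $H_0$ strictly below $H_0(c_n(0))$, contradicting the averaging identity above. Hence $H_0(c_n(0))\le H_0(1/n)=n\log n$, which yields $\phi(0)\le n\log n$ and therefore \eqref{eq:zero-outage-rayleigh-upper-bound-best}.

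The main obstacle, and the reason the bound cannot be obtained by naively substituting $c=1/n$, is precisely that $H_0$ is not monotone on $[0,1/n]$: it decreases, reaches an interior minimum at $1/(n(n-1))$, and increases again, with $c_n(0)$ lying to the left of that minimum. Consequently the endpoint value $H_0(1/n)$ does not dominate $H_0$ pointwise, and one genuinely needs the averaging characterization of $c_n(0)$ combined with convexity to compare $H_0(c_n(0))$ with $H_0(1/n)$. It is also worth recording the boundary case $n=2$, where $c_n(0)=1/n$ and the interval degenerates to a point, so the bound holds with equality; for $n\ge3$ the inequality is strict.
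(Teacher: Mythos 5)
Your proof is correct, but it follows a genuinely different (and more complete) route than the paper's. The paper's own proof is a two-line argument: it cites \cite[Lemma~3.3]{Wang2013} for monotonicity of $\phi$, notes from the definition \eqref{eq:cmin-decreasing} that $c_n(a)\leq \frac{1}{n}$, and then concludes $\phi(0)\leq n\log(n)$ by combining this with \eqref{eq:phi-rayleigh-decreasing} --- in effect substituting the endpoint value $c=\frac{1}{n}$ into $H_0$. This is exactly the ``naive substitution'' you flag as insufficient: since $H_0(x)=-(n-1)\log(1-(n-1)x)-\log x$ decreases on $\left[0,\tfrac{1}{n(n-1)}\right]$ and increases afterwards, while $c_n(0)$ sits in the decreasing region, the inequality $H_0(c_n(0))\leq H_0\!\left(\tfrac{1}{n}\right)$ does not follow from $c_n(0)\leq\frac{1}{n}$ and monotonicity alone. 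Your argument supplies precisely the missing justification: you extract the equality characterization $\int_{c_n(0)}^{1/n}H_0(t)\,\diff{t}=\left(\tfrac{1}{n}-c_n(0)\right)H_0(c_n(0))$ from the minimality of $c_n(0)$ (using continuity and $H_0(c)\to\infty$ as $c\to 0$), and then use convexity of $H_0$ to show that this average could not equal $H_0(c_n(0))$ if $H_0(c_n(0))$ exceeded $H_0\!\left(\tfrac{1}{n}\right)$; this is essentially a Hermite--Hadamard-type comparison. Your treatment of the degenerate case $n=2$ (where $c_2(0)=\tfrac{1}{2}$ and the bound is tight, matching the paper's example) closes the one case where the averaging identity has no content. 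What the paper's approach buys is brevity by outsourcing the structural facts to \cite{Wang2013}; what yours buys is a self-contained and rigorous verification of the key comparison step that the paper glosses over, making explicit why the endpoint evaluation $H_0(1/n)=n\log n$ dominates the actual value $\phi(0)=H_0(c_n(0))$.
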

\begin{proof}
	As stated in \cite[Lemma~3.3]{Wang2013}, $\phi(a)$ is strictly increasing. By definition (cf.~\eqref{eq:cmin-decreasing}), $c_n(a)\leq\frac{1}{n}$ holds.
	Combining this with \eqref{eq:phi-rayleigh-decreasing} gives,
	\begin{equation}
		\phi(0) \leq n\log(n).
	\end{equation}
	Applying this to \eqref{eq:best-outage} yields \eqref{eq:zero-outage-rayleigh-upper-bound-best}.
\end{proof}

\begin{example}[Two User Rayleigh Fading]
	As a short example of the above corollary, we consider the two-user scenario with Rayleigh fading links. %
	In this example, the \gls{cdf} and its inverse are defined as $F(x)=1-\exp(-x)$ and $G(x)=-\log(1-x)$, respectively.
	Solving \eqref{eq:cmin-decreasing} for $n=2$ and $a=0$ gives the solution $c_2(0)=\frac{1}{2}$. This yields
	\begin{align}
	\phi(0) &= H_0\left(\frac{1}{2}\right)\\
	&= 2\log\left(2\right),
	\end{align}
	and
	\begin{equation}
	\overline{R^{0}}(\rho) = \log_2\left(1+2\rho\log(2)\right).
	\end{equation}
	The same result of this special case is also derived in \cite{Jorswieck2019} by direct methods and a special construction of joint distribution for two random variables. Note that in this case of $n=2$, the upper bound formulated in the previous corollary is tight.
\end{example}

\subsubsection{Lower Bound}
The lower bound $\worstout$ on the $\varepsilon$-outage capacity is calculated according to \eqref{eq:worst-outage}. The approach is very similar to the one of calculating the upper bound in the previous section.
However, $\phi_{-}$ is required and all necessary steps have to be calculated for the random variables $-\abs{h_i}^2\sim \exp(1)$. The distribution of these variables has the increasing density $f_{-}(x)=\exp(x)$ on the support $(-\infty, 0]$.
Using this gives the following inequality to be solved for $c$ according to \eqref{eq:cmin-increasing}
\begin{equation}
	\frac{n(a+nc-1)}{1+(n-1)a} \leq \log(a+c) - \log(1-(n-1)c)\,.
\end{equation}

Depending on $c_{n}(a)$, $\phi_{-}$ can be expressed as
\begin{equation}\label{eq:phi-minus-rayleigh-identical}
\phi_{-}(a) = \begin{cases}
\log(a) & \text{if } c_{n}(a) > 0\\
n\cdot{}\frac{a - a\log(a) - 1}{1-a} & \text{if } c_{n}(a) = 0
\end{cases}.
\end{equation}

\begin{rem}\label{rem:min-a-for-c-zero}
	It should be noted that in this case the solution $c_{n}(a)=0$ occurs, if the following holds
	\begin{equation}\label{eq:condition-a-c-zero}
		\frac{n(a-1)}{1+(n-1)a} \leq \log(a).
	\end{equation}
	This inequality can be solved numerically and gives a minimum $a$, which is decreasing with $n$. As an example, the minimum $a$ for which the inequality holds is around $0.117$ and $4.56\cdot{}10^{-5}$ for $n=3$ and $n=10$, respectively.
	For the lower bound on the outage-capacity in Theorem~\ref{thm:bounds-outage-identical}, $\phi_{-}$ is evaluated at the argument $1-\varepsilon$. An interesting application of the outage-capacity is \gls{urllc}, where $\varepsilon$ close to zero are considered~\cite{Bennis2018}. In this case, $1-\varepsilon$ is close to one and therefore the above inequality holds. Therefore, $c_{n}(1-\varepsilon)$ is zero in this case.
\end{rem}

The calculated $\worstout$ for a Rayleigh fading channel with $n=5$ and $\rho=\SI{5}{\decibel}$ is shown in Fig.~\ref{fig:outage-rayleigh} and \ref{fig:outage-rayleigh-low-eps}.
One noticeable point is that the worst-case zero-outage capacity is zero. However, for positive $\varepsilon$, it attains positive values and tends towards infinity for $\varepsilon\to 1$. This is expected since a value of $\varepsilon=1$ implies that an arbitrarily large number of transmission errors is tolerated.

\subsubsection{Comonotonic Coefficients}
One special case are comonotonic coefficients, i.e., $\abs{h_i}^2\sim G(U)$ with $U\sim\mathcal{U}[0, 1]$. In this case, the following holds
\begin{align*}
\Pr\left(\sum_{i=1}^{n}\abs{h_i}^2 \leq \frac{2^R - 1}{\rho}\right) &= \Pr\left(\sum_{i=1}^{n}\abs{h_1}^2 \leq \frac{2^R - 1}{\rho}\right)\\
&= \Pr\left(\abs{h_1}^2 \leq \frac{2^R - 1}{n\rho}\right)\\
&= F\left(\frac{2^R - 1}{n\rho}\right)\,,
\end{align*}
which combined with \eqref{eq:def-outage} yields the following expression for the $\varepsilon$-outage capacity
\begin{equation}\label{eq:outage-comonotonic-rayleigh}
R^{\varepsilon}_{\text{comon}} = \log_2\left(1 - \rho\cdot{}n\log(1-\varepsilon)\right)\,.
\end{equation}
The outage capacity for this case is also shown for comparison in Fig.~\ref{fig:outage-rayleigh} and \ref{fig:outage-rayleigh-low-eps}.

\subsubsection{Independent Coefficients}
In the case of \gls{iid} fading coefficients $h_i$, the outage capacity can be given as a closed-form expression.
Since the sum of independent exponentially distributed random variables with the same mean is Gamma-distributed~\cite[Ch. 17.2]{Forbes2010}, the $\varepsilon$-outage capacity in the independent case is given as
\begin{equation}
	R^{\varepsilon}_{\text{iid}} = \log_2\left(1 + \rho\inv{P}\left(n, \varepsilon\right)\right),
\end{equation}
where $P(a, x)$ is the regularized incomplete Gamma-function~\cite[Eq. 6.5.1]{Abramowitz1972} (which is the \gls{cdf} of the Gamma-distribution \cite{Forbes2010}).

\begin{figure}
	\centering
	\begin{tikzpicture}%
\begin{axis}[
	xlabel={$\varepsilon$},
	ylabel={$R^{\varepsilon}$},
	width=\linewidth,
	height=.3\textheight,
	legend entries={{Best Case \bestout}, {Worst Case \worstout}, {Iid Case}, {Comon. Case}},
	legend pos=south east,
	legend columns=2,
	legend cell align=left,
	ymajorgrids,
	xmajorgrids,
	xminorgrids,
	grid style={line width=.1pt, draw=gray!20},
	major grid style={line width=.25pt,draw=gray!40},
]
\addplot[thick, plot0] table[x=eps,y=capacity] {data/outage-rayleigh-Best_Case-n5-snr3.16.dat};
\addplot[thick, plot1, dashed] table[x=eps,y=capacity] {data/outage-rayleigh-Worst_Case-n5-snr3.16.dat};
\addplot[thick, plot2, densely dashdotted] table[x=eps,y=capacity] {data/outage-rayleigh-IID_Case-n5-snr3.16.dat};
\addplot[thick, plot3, densely dotted] table[x=eps,y=capacity] {data/outage-rayleigh-Comon_Case-n5-snr3.16.dat};
\end{axis}
\end{tikzpicture}%
	\caption{Lower and upper bound on the $\varepsilon$-outage capacity of a Rayleigh fading channel with $\abs{h_i}^2\sim\exp(1), i=1, \dots{}, 5$ at $\rho=\SI{5}{\decibel}$. In addition, the special cases of independent and comonotonic $\abs{h_i}^2$ are shown.}
	\label{fig:outage-rayleigh}
\end{figure}

\begin{figure}
	\centering
	\begin{tikzpicture}%
\begin{axis}[
	xlabel={$\varepsilon$},
	ylabel={$R^{\varepsilon}$},
	width=.95\linewidth,
	height=.3\textheight,
	xmin=1e-7,
	xmax=1e-1,
	ymin=1e-6,
	ymax=10,
	xmode=log,
	ymode=log,
	legend entries={{Best Case \bestout}, {Worst Case \worstout}, {Iid Case}, {Comon. Case}},
	legend pos=south east,
	legend columns=2,
	legend cell align=left,
	ymajorgrids,
	xmajorgrids,
	xminorgrids,
	grid style={line width=.1pt, draw=gray!20},
	major grid style={line width=.25pt,draw=gray!40},
]
\addplot[thick, plot0] table[x=eps,y=best] {data/rayleigh-outage-capacities-small-eps-n5-snr5dB.dat};
\addplot[thick, plot1, dashed] table[x=eps,y=worst] {data/rayleigh-outage-capacities-small-eps-n5-snr5dB.dat};
\addplot[thick, plot2, densely dashdotted] table[x=eps,y=iid] {data/rayleigh-outage-capacities-small-eps-n5-snr5dB.dat};
\addplot[thick, plot3, densely dotted] table[x=eps,y=comon] {data/rayleigh-outage-capacities-small-eps-n5-snr5dB.dat};
\end{axis}
\end{tikzpicture}%
	\caption{Lower and upper bound on the $\varepsilon$-outage capacity of a Rayleigh fading channel with $\abs{h_i}^2\sim\exp(1), i=1, \dots{}, 5$ at $\rho=\SI{5}{\decibel}$. In addition, the special cases of independent and comonotonic $\abs{h_i}^2$ are shown.}
	\label{fig:outage-rayleigh-low-eps}
\end{figure}

\subsubsection{Summary}
Figures~\ref{fig:outage-rayleigh} and \ref{fig:outage-rayleigh-low-eps} show the $\varepsilon$-outage capacity for Rayleigh fading coefficients with $n=5$ and an \gls{snr} of $\rho=\SI{5}{\decibel}$ in the best case and worst case, along with the cases of \gls{iid} and comonotonic coefficients.

As expected, the \gls{iid} and comonotonic cases lie in between the other two and show a similar behavior. All curves increase with increasing $\varepsilon$ and tends towards infinity for $\varepsilon\to 1$. For small $\varepsilon$, the outage capacity for comonotic coefficients is smaller than for independent ones. However, there is a $\varepsilon$ for which the curves meet and the capacity of the \gls{iid} case becomes smaller. For the interesting operating region shown in Fig.~\ref{fig:outage-rayleigh-low-eps}, the \gls{iid} case performs better than the comonotonic case with a significantly higher $\varepsilon$-outage capacity.

Interestingly, all zero-outage capacities are zero except in the best case. This is due to the considered model of no \gls{csi} at the transmitter. In the case of perfect \gls{csi} at the transmitter, the zero-outage capacity in the \gls{iid} case increases to a positive value, which will be shown in Section~\ref{sub:perfect-csi-rayleigh}.

Since we are often interested in $\varepsilon$ close to zero, e.g., in the context of \gls{urllc}, a interesting property of the different curves in Fig.~\ref{fig:outage-rayleigh} is the slope at $\varepsilon=0$. As can be seen from the figure, the slope in the worst case is the lowest with $\frac{n\rho}{2\log 2}$. For comonotonic coefficients, the slope increases to $\frac{n\rho}{\log 2}$. In the case of \gls{iid} coefficients, the slope at $\varepsilon=0$ approaches infinity. This means that allowing even a very small outage probability increases the $\varepsilon$-outage capacity drastically in the case of independent coefficients.

\subsection{Different Marginals}
Next, a more general scenario of Rayleigh fading channels is considered. In this setting, we now consider different fading variances $\inv{\lambda_i}$ for the users. For a fair comparison, we constrain the sum of the variance to be constant and equal to the number of users, i.e., $\sum_{i=1}^{n}\inv{\lambda_i}=n$ with $\abs{h_i}^2\sim\exp(\lambda_i)$.
This could happen when one user moves closer toward the base station while another one moves away~\cite{Jorswieck2006}. 

\subsubsection{Upper Bound}
The upper bound on the $\varepsilon$-outage capacity for different marginal distributions is calculated according to \eqref{eq:best-outage-different}. Therefore, we need to calculate $\Phi(\varepsilon)$ first. The function $\Phi$ is given in \eqref{eq:def-Phi} and can be determined as follows for the exponential distribution
\begin{align}\label{eq:Phi-rayleigh-different}
\Phi(\varepsilon) &= \sum_{i=1}^{n}\expect[X_i\sim\exp(\lambda_i)]{X_i|X_i\geq G_i(\varepsilon)}\\
&= \sum_{i=1}^{n}\expect{X_i} + G_i(\varepsilon)\\
&= (1-\log(1-\varepsilon)) \sum_{i=1}^{n}\frac{1}{\lambda_i}\\
&= (1-\log(1-\varepsilon))\cdot{}n\label{eq:Phi-rayleigh-different-last} %
\end{align}
where the \enquote{lack of memory property} of the exponential distribution~\cite{Everitt2010} and our assumption of the constant sum of the expected values are used.

Finally, the upper bound $\bestout$ can be formulated according to \eqref{eq:best-outage-different} as
\begin{equation}
	\bestout(\rho) = \log_2\left(1 + \rho \cdot{} n \cdot{} (1-\log(1-\varepsilon))\right)\,.
\end{equation}

\subsubsection{Lower Bound}
The lower bound $\worstout$ can be determined in an analogue way according to \eqref{eq:worst-outage-different}. In order to do this, the function $\Phi_{-}$ has to be calculated.
This can be done as follows
\begin{align}\label{eq:Phi-minus-rayleigh-different}
\Phi_{-}(a) &= \sum_{i=1}^{n}\expect[-X_i\sim\exp(\lambda_i)]{X_i|X_i\geq G_i(a)}\\
&= \sum_{i=1}^{n}\int_{-\infty}^{\infty}{x_i f_{X_i|X_i\geq G_i(a)}(x_i)}\diff{x_i}\\
&= \sum_{i=1}^{n}\frac{1}{\Pr(X_i\geq G_i(a))}\int_{G_i(a)}^{\infty}x_i f_{X_i}(x_i) \diff{x_i}\\
&= \sum_{i=1}^{n}\frac{1}{1-\underbrace{F_i(G_i(a))}_{a}}\int_{G_i(a)}^{0}x_i \lambda_i\exp(\lambda_i x_i) \diff{x_i}\\
&= \frac{a - a\log(a) - 1}{1-a} \sum_{i=1}^{n}\frac{1}{\lambda_i}\\
&= \frac{a - a\log(a) - 1}{1-a} \cdot{} n\label{eq:Phi-minus-rayleigh-different-last-eq} %
\end{align}

Therefore, the lower bound on the $\varepsilon$-outage capacity can be formulated as
\begin{equation}
	\worstout(\rho) = \log_2\left(1 + \rho \cdot{} n \cdot{} \frac{\varepsilon + (1-\varepsilon)\log(1-\varepsilon)}{\varepsilon}\right)\,.
\end{equation}

\begin{rem}
	As mentioned in Remark~\ref{rem:bounds-overlap}, it is possible that the in general looser bounds from Theorem~\ref{thm:bounds-outage-different} for different marginals are identical to the ones from Theorem~\ref{thm:bounds-outage-identical} for identical marginals.
	
	The given example of Rayleigh fading shows this behavior for the lower bound $\worstout$. In the homogeneous case of $\abs{h_i}^2\sim\exp(1)$, $\phi_{-}(a)$ is given by \eqref{eq:phi-minus-rayleigh-identical}.
	For $c_n(a)=0$, this is identical to $\Phi_{-}(a)$ for Rayleigh fading with different marginals $\abs{h_i}^2\sim\exp(\lambda_i)$ and $\sum_{i=1}^{n}\inv{\lambda_i}=n$ shown in \eqref{eq:Phi-minus-rayleigh-different-last-eq}.
	
	This means that, in the case of $c_n(a)=0$, the lower bounds $\worstout$ in the cases of identical and different marginals are the same.
	Equation~\eqref{eq:condition-a-c-zero} states the condition that the case $c_n(a)=0$ occurs. As stated in Remark~\ref{rem:min-a-for-c-zero}, there is a minimum $a$ (decreasing with $n$) for which the condition holds.
	
	Since we are usually interested in the ultra-reliable scenarios, our considered $\varepsilon$ are close to zero. Hence, $a = 1-\varepsilon$ is close to one and greater than the minimum $a$ such that condition \eqref{eq:condition-a-c-zero} holds. Therefore, $c_n(1-\varepsilon)$ is zero and $\phi_{-}(1-\varepsilon)=\Phi_{-}(1-\varepsilon)$.
\end{rem}

\subsection{Perfect CSI at the Transmitter}\label{sub:perfect-csi-rayleigh}
We now go back to the assumption from the previous section that $\abs{h_i}^2\sim \exp(\lambda)$, but now also assume that the transmitter has perfect \gls{csi}.
In the following, we will only focus on the zero-outage capacity and compare it to the scenario without \gls{csit}.

For convenience, we will restate the zero-outage capacity in the case of perfect \gls{csit} given in \eqref{eq:zero-out-perfect-csi}
\begin{equation*}
R^{0}_{\text{CSIT}} = \log_2\left(1+\rho \left( \expect{\dfrac{1}{\sum_{i=1}^{n}\abs{h_i}^2}}\right)^{-1}\right)\,.
\end{equation*}

\subsubsection{Upper Bound}
The upper bound on the zero-outage capacity with perfect \gls{csit} for Rayleigh fading can be evaluated according to \eqref{eq:zero-out-perfect-csi-best} in Lemma~\ref{lem:zero-out-perfect-csi-best}. The expression for $H$ in this case is given in Table~\ref{tab:formula-rayleigh} in the appendix.
Figure~\ref{fig:outage-rayleigh-perfect-csit} shows this upper bound on the zero-outage capacity for an \gls{snr} of \SI{5}{\decibel} for different $n$. Since there exists no closed-form solution for determining $c_n$, the values are evaluated numerically.

\subsubsection{Lower Bound}
As stated in Corollary~\ref{cor:maximum-expect-inverse-sum}, the lower bound is attained for comonotonic variables $(\abs{h_1}^2, \dots{}, \abs{h_n}^2)$. In our case, $\abs{h_i}\sim \exp(\lambda)=G(U)$ with $U\sim\mathcal{U}[0, 1]$ which gives the following solution to the maximum value of the expected value
\begin{align}
	\max \expect[\abs{h_i}^2\sim\exp(\lambda)]{\frac{1}{\sum_{i=1}^{n}\abs{h_i}^2}} &= \expect[{U\sim\mathcal{U}[0, 1]}]{\frac{1}{\sum_{i=1}^{n}G(U)}}\\
	&= \frac{1}{n}\int_{0}^{1}\frac{1}{G(u)}\diff{u}\\
	&= -\frac{1}{n}\int_{0}^{1}\frac{\lambda}{\log(1-u)}\diff{u}\,.
\end{align}
Since this grows to infinity, the zero-outage capacity from \eqref{eq:zero-out-perfect-csi} goes down to zero in this worst-case scenario of comonotonic fading coefficients, i.e., $\underline{R^{0}_{\text{CSIT}}} = 0$.

\subsubsection{Independent Coefficients}
As stated previously, the sum of independent exponentially distributed random variables $X_i\sim\exp(\lambda)$ follows a Gamma-distribution $\sum_{i=1}^{n}X_i\sim \Gamma(n, \inv{\lambda})$ (shape and scale)~\cite[Ch. 17.2]{Forbes2010}.
In our case, we need the expected value of the inverse of such a Gamma-distributed variable. This is distributed according to an inverse-gamma distribution $\text{IG}(n, \inv{\lambda})$ which has the expected value of~\cite[Ch.~22.4]{Forbes2010}
\begin{equation*}
\expect[S\sim\Gamma(n, \inv{\lambda})]{\frac{1}{S}} = \frac{\lambda}{n-1}\,.
\end{equation*}

Combining this expression with \eqref{eq:zero-out-perfect-csi} gives the zero-outage capacity with perfect \gls{csit} for independent Rayleigh fading coefficients $\abs{h_i}\sim\exp(\lambda)$ as
\begin{equation}
R^{0}_{\text{CSIT, iid}} = \log_2\left(1+\frac{\rho(n-1)}{\lambda}\right)\,.
\end{equation}

\subsubsection{Summary}
Figure~\ref{fig:outage-rayleigh-perfect-csit} shows the best-case and \gls{iid} case zero-outage capacities with perfect \gls{csit} for varying $n$ and $\abs{h_i}^2\sim\exp(1)$. For comparison, the best-case curve from the scenario without \gls{csit} is also given. Note that the zero-outage for \gls{iid} coefficients is always zero in this case.
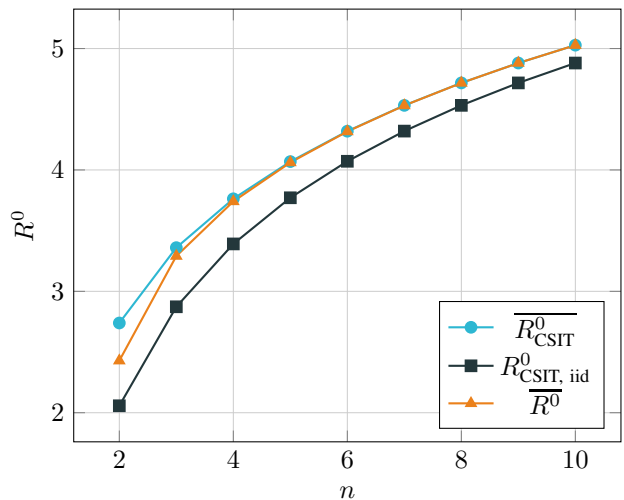
\begin{figure}[t]%
	\centering
	\begin{tikzpicture}%
\begin{axis}[
	xlabel={$n$},
	ylabel={$R^{0}$},
	width=\linewidth,
	height=.3\textheight,
	legend entries={{$\overline{R^{0}_{\text{CSIT}}}$}, {${R^{0}_{\text{CSIT, iid}}}$}, {$\overline{R^{0}}$}},
	legend pos=south east,
	ymajorgrids,
	xmajorgrids,
	xminorgrids,
	grid style={line width=.1pt, draw=gray!20},
	major grid style={line width=.25pt,draw=gray!40},
]
\addplot[thick, plot0, mark=*] table[x=n,y=capacity] {data/perfect-csi-best.dat};
\addplot[thick, plot1, mark=square*] table[x=n,y=capacity] {data/perfect-csi-iid.dat};

\addplot[thick, plot2, mark=triangle*] table[x=n,y=capacity] {data/zero_out-no-csit-best.dat};
\end{axis}
\end{tikzpicture}%
	\caption{Comparison of the best-case and iid case zero-outage capacity with and without perfect CSI at the transmitter for $\abs{h_i}^2\sim\exp(1)$ and $\rho=\SI{5}{\decibel}$.}
	\label{fig:outage-rayleigh-perfect-csit}
\end{figure}

The first obvious thing which can be seen is that there is a gap between the \gls{iid} and the best-case zero-outage capacity in the case of perfect \gls{csit}. More interestingly, the gap between the best-case with and without perfect \gls{csit} vanishes as $n$ grows. This behavior is expected according to Theorem~\ref{thm:gap-best-case-zero-out-csit} and the fact that $c_n$ approaches zero for increasing $n$. In the case of $c_n=0$, the gap would be equal to zero.
In conclusion, this shows that the benefit of having perfect \gls{csit} vanishes with increasing $n$ in the best-case scenario.

However, it should be emphasized that this is only true for the best case scenario. Without perfect \gls{csit}, the zero-outage capacity drops to zero for many other dependence structures of the fading coefficients, e.g., the independent case. However, with perfect \gls{csit} the zero-outage capacity is positive in the \gls{iid} case.

\section{Conclusion}\label{sec:conclusion}

In this work, lower and upper bounds on the $\varepsilon$-outage capacity including the zero-outage capacity of fading channels with monotone marginal densities are derived. Since the individual fading coefficients were not assumed to be independent, these bounds hold for the general case of arbitrary dependency between the fading channels.

A remarkable result is that the zero-outage capacity of $n\geq 2$ Rayleigh fading links without instantaneous \gls{csit} can be greater than zero. However, a particular dependence structure between the fading coefficients is required. In the worst-case (and even the \gls{iid}) case, the zero-outage capacity is equal to zero.
Note that one challenge follows from the fact that \cite{Wang2013} only proves the existence of such a dependence structure. Deriving an explicit form remains an open problem, not to speak about the \enquote*{practical construction} of the corresponding propagation scenario. 

This leads to the question about the scenario in which the channel gains are negatively dependent. As described in the introduction, \cite{Haber1974,Akki1985} have shown how to create negatively correlated diversity branches in the frequency domain. Depending on the propagation environment, the careful placement of antennas and nodes can lead to negative dependencies. The fading parameters related to geometry can be negatively dependent, e.g., in an opportunistic beamforming scheme with power control the signal strength of different beamforming vectors are negatively dependent. Finally, recent results on reconfigurable meta-surfaces \cite{Renzo2019} indicate the possibility of tuning the propagation environment and form the resulting wireless channels in a smart and flexible way. 

The work can be extended to non-monotone densities, i.e., to tail-monotone densities, e.g., Nakagami-m fading channels. We leave this as future work, because the bounds do not carry over to this case immediately \cite[Section 3.3]{Wang2013}. 

For practical applications, a finite blocklength is required. Especially in the context of \gls{urllc}, short blocklengths are needed to ensure a low latency. Therefore, it will be interesting for future work to extend the results to finite blocklengths, e.g., by using Polyanski's work on finite blocklength communication~\cite{Polyanskiy2010,Yang2014}.
\section*{Acknowledgment}
The authors would like to thank the anonymous reviewers for their technical and editorial comments, which significantly helped to improve the presentation of the results.

\appendix
\section{Appendix}
In the appendix, we state some derivations and relations which are needed in the main part.

Table~\ref{tab:formula-rayleigh} lists the different functions for the specific example of Rayleigh fading presented in Section~\ref{sec:rayleigh}.

\subsection{Expressions for \texorpdfstring{$-X$}{-X}}
If $X\sim F$ is distributed according to a monotone decreasing density $f(x)$, $-X$ has a monotone increasing density $f_{-}(x)=f(-x)$.
This yields the following relations
\begin{align}
	F_{-}(x) &= \int_{-\infty}^{x}f(-t)\diff{t}\\
	&= \int_{-x}^{\infty}f(a)\diff{a}\\
	&= 1 - F(-x)\,,
\end{align}
where $F(x)$ is the \gls{cdf} of $X$.
For the inverse $G_{-}$, it directly follows that
\begin{align}
	G_{-}(x) &= \inv{F_{-}}(x)\\
	&= \inv{\left(1-F(-x)\right)}\\
	&= -\inv{F}(1-x)\\
	&= -G(1-x)
\end{align}

We assume that $f(x)$ is a decreasing density function. Therefore, $H_a(x)$ is defined according to \eqref{eq:h-decreasing}. Since $f_{-}$ is increasing, $H^{-}_{a}(x)$ is defined according to \eqref{eq:h-increasing}, which can be rewritten using the above relations as follows
\begin{align}
H_{a}^{-}(x) &= G_{-}(a+x) + (n-1)G_{-}\left(1-(n-1)x\right)\\
&= -G(1-a-x) - (n-1) G(1 - 1 + (n-1)x)\\
&= -\Big(G(1-a-x) + (n-1) G((n-1)x)\Big)
\end{align}
using only the function $G$ of the original density $f$.

\subsection{Conditional Expectation}
The function $\phi$ defined in \eqref{eq:phi-decreasing} and \eqref{eq:phi-increasing} is calculated according to the conditional expectation $\expect{X|X>G(a)}$, if $c_n(a)=0$.
This conditional expectation is calculated as follows
\begin{align}
\expect{X|X > G(a)} &= \int_{-\infty}^{\infty}{x f_{X|X > G(a)}(x)}\diff{x}\\
&= \frac{1}{\Pr(X > G(a))}\int_{G(a)}^{\infty}x f(x) \diff{x}\\
&= \frac{1}{1-\underbrace{F(G(a))}_{a}}\int_{G(a)}^{\infty}x f(x) \diff{x}\,.
\end{align}
Applying the previously stated relations between $X$ and $-X$ gives the following expression
\begin{align}
\mathbb{E}_{-X\sim F}&\left[-X|-X > G_{-}(a)\right] = \int_{-\infty}^{\infty}{x f_{-X|-X > G_{-}(a)}(x)}\diff{x}\\
&= \frac{1}{\Pr(X > G(a))}\int_{G_{-}(a)}^{\infty}x f_{-}(x) \diff{x}\\
&= \frac{1}{1-\underbrace{F(G(a))}_{a}}\int_{G_{-}(a)}^{\infty}x f(-x) \diff{x}\\
&= \frac{1}{1-a}\int_{-G_{-}(a)}^{-\infty}t f(t) \diff{t}\\
&= \frac{-1}{1-a}\int_{-\infty}^{G(1-a)}t f(t) \diff{t}\\
&= \frac{-1}{1-a}\left(F(G(1-a))\cdot{}\expect[X\sim F]{X|X<G(1-a)}\right)\\
&= -\expect[X\sim F]{X|X<G(1-a)}
\end{align}%

\subsection{Function \texorpdfstring{$\phi$}{phi}}
For decreasing densities, the function $\phi$ is defined according to \eqref{eq:phi-decreasing}. Therefore, $\phi_{-}$ is defined according to \eqref{eq:phi-increasing}. Combining the above derivations yields the following expression
\begin{equation}
\phi_{-}(a) = \begin{cases}
-\left(G(1-a) + (n-1) G(0)\right) & \text{if } c^{-}_{n}(a) > 0\\
-n\expect{X|X<G(1-a)} & \text{if } c^{-}_{n}(a) = 0
\end{cases}\,,
\end{equation}
where $c_{n}^{-}(a)$ is defined according to \eqref{eq:cmin-increasing} for $f_{-}(x)$.
\subsection{Rayleigh Fading}\label{app:rayleigh-fading-example}
In the specific example of Rayleigh fading, we have $\abs{h_i}^2\sim\exp(\lambda_i)$. The different functions derived in the previous sections are listed in Table~\ref{tab:formula-rayleigh}.

\begin{table}[htb]
	\renewcommand{\arraystretch}{1.25}
	\centering
	\caption{Evaluated Functions for $X\sim\exp(\lambda)$}
	\begin{tabular}{cc}
		\toprule
		\textbf{Function} & \textbf{Expression}\\
		\midrule
		$f(x)$ & $\lambda\exp(-\lambda x)$\\
		$F(x)$ & $1 - \exp(-\lambda x)$\\
		$G(x)$ & $-\frac{\log(1-x)}{\lambda}$\\
		$H_a(x)$ & $-(n-1)\frac{\log(1-a-(n-1)x)}{\lambda} -\frac{\log(x)}{\lambda}$\\
		$\expect[X]{X|X>G(a)}$ & $\frac{1}{\lambda} + G(a)$\\
		$\phi(a)$ & $\displaystyle\begin{cases}
		H_{a}(c_{n}(a)) & \text{if } c_{n}(a) > 0\\
		n\frac{1-\log(1-a)}{\lambda} & \text{if } c_{n}(a) = 0
		\end{cases}$\\
		
		\midrule
		$f_{-}(x)$ & $\lambda\exp(\lambda x)$\\
		$F_{-}(x)$ & $\exp(\lambda x)$\\
		$G_{-}(x)$ & $\frac{\log(x)}{\lambda}$\\
		$H_a^{-}(x)$ & $\frac{\log(a+x)}{\lambda} + (n-1) \frac{\log(1-(n-1)x)}{\lambda}$\\
		$\expect[-X]{-X|-X>G_{-}(a)}$ & $\frac{a - a\log(a) - 1}{1-a}\cdot{}\frac{1}{\lambda}$\\
		$\phi_{-}(a)$ & $\displaystyle\begin{cases}
		H_{a}^{-}(0) & \text{if } c^{-}_{n}(a) > 0\\
		\frac{a - a\log(a) - 1}{1-a}\cdot{}\frac{1}{\lambda} & \text{if } c^{-}_{n}(a) = 0
		\end{cases}$\\
		\bottomrule
	\end{tabular}
	\label{tab:formula-rayleigh}
\end{table}
\subsection{Convex Minimization Problem}\label{app:minimization-problem}

In \cite[Thm.~3.4]{Wang2011}, the following solution to the minimization problem is proven for $X_i\sim F$ with a monotone density
\begin{equation}
\min_{X_1, \dots{}, X_n\sim F}\expect{g(X_1+\cdots{}+X_n)} = \expect[Q_{n}^{F}]{g(X_1 + \cdots{}+X_n)}\,,
\end{equation}
where $Q_{n}^{F}$ is the particular dependency structure constructed in \cite{Wang2011} and $g$ is a convex function.
The minimal value in the case of increasing densities is stated in \cite[Thm.~3.5]{Wang2011} as \eqref{eq:app-min-increasing-dens-wang} on top of the page.
\begin{figure*}[t]
	\begin{equation}\label{eq:app-min-increasing-dens-wang}
	\min_{X_1, \dots{}, X_n\sim F}\expect{g(X_1+\cdots{}+X_n)} = n\int_{0}^{c_n}g\left(H(x)\right)\diff{x} + (1-c_n)g\left(H(c_n)\right)\,.
	\end{equation}
\end{figure*}

In Lemma~\ref{lem:zero-out-perfect-csi-best}, we need the solution to this optimization problem for decreasing densities.
By combining the above with the proof of \cite[Prop.~3.1]{Wang2013}, this can be derived as follows.
Based on the properties of the dependency structure $(U_1, \dots{}, U_n)\sim Q_{n}^{F}$ for decreasing densities stated in \cite[Sec.~3.1]{Wang2013}, set $A_i=\left\{U_i\in[1-c, 1]\right\}$ and $B_i=\left\{U_i\in [0, (n-1)c]\right\}$. The variables $U_i\sim\mathcal{U}(0, 1)$ are independent uniformly distributed random variables, such that $X_i = G(U_i)$. Define $D_i=A_i\cup B_i$ and therefore $D_i^{\complement}=\left\{U_i\in((n-1)c, 1-c)\right\}$. Because of the structure of $Q_n^F$, the sum is a constant on the set $D_i^{\complement}$ equal to $H_0(c_n)$.
Based on the proof of \cite[Thm.~3.5]{Wang2011}, this can be used to derive the minimum value for decreasing densities
\begin{align}
&\min_{X_1, \dots{}, X_n\sim F}\expect{g(X_1+\cdots{}+X_n)}\\ &= \expect[Q_{n}^{F}]{g(G(U_1) + \cdots{}+G(U_n))}\\
\begin{split}
&= n\expect[Q_{n}^{F}]{g(G(U_1) + \cdots{}+G(U_n))\mathbf{1}_{A_1}} \\
&\qquad\quad+ \expect[Q_{n}^{F}]{g(G(U_1) + \cdots{}+G(U_n))\mathbf{1}_{D_1^{\complement}}}
\end{split}\\
&= n\expect[\mathcal{U}]{g(H_0(1-U_1))\mathbf{1}_{A_1}} + \expect[\mathcal{U}]{g(H_0(c_n))\mathbf{1}_{D_1^{\complement}}}\\
&= n\int_{1-c_n}^{1}g(H_0(1-x))\diff{x} + (1-c_n)g(H_0(c_n))\,.
\end{align}
All details and rigorous proofs needed for the above can be found in \cite{Wang2011,Wang2013}. Especially the proofs of \cite[Thm.~3.5]{Wang2011} and \cite[Prop.~3.1]{Wang2013} have been used here.
\printbibliography
\end{document}